\newtheorem{theorem}{Theorem}
\newtheorem{condition}{Condition}
\newtheorem{definition}{Definition}
\newtheorem{proposition}{Proposition}
\newtheorem{lemma}{Lemma}
\newtheorem{remark}{Remark}
\newcounter{parentalgorithm}
\newcommand{\Prob}{\mathbb{P}}
\newcommand{\bL}{\boldsymbol{L}}
\newcommand{\bJ}{\boldsymbol{J}}
\newcommand{\bS}{\boldsymbol{S}}
\newcommand{\bEta}{\boldsymbol{\eta}}
\newcommand{\bW}{\boldsymbol{W}}
\newcommand{\bP}{\boldsymbol{P}}
\newcommand{\bD}{\boldsymbol{D}}
\newcommand{\bA}{\boldsymbol{A}}
\newcommand{\E}{\mathbb{E}}
\newcommand{\T}{\mathsf{T}}
\newcommand{\cD}{\mathcal{D}}
\newcommand{\cV}{\mathcal{V}}
\newcommand{\cM}{\mathcal{M}}
\newcommand{\cG}{\mathcal{G}}
\newcommand{\cE}{\mathcal{E}}
\newcommand{\cH}{\mathcal{H}}
\newcommand{\cN}{\mathcal{N}}
\newcommand{\by}{{\boldsymbol{y}}}
\newcommand{\bx}{{\boldsymbol{x}}}
\newcommand{\bs}{{\boldsymbol{s}}}
\newcommand{\be}{{\boldsymbol{e}}}
\newcommand{\bv}{{\boldsymbol{v}}}
\newcommand{\bu}{{\boldsymbol{u}}}
\newcommand{\ignore}[1]{{}}
\newcommand{\lb}{\left(}
\newcommand{\rb}{\right)}
\begin{document}
\title{Order-$2$ Asymptotic Optimality of the Fully Distributed Sequential Hypothesis Test}
\author{Shang Li and Xiaodong Wang
}
\maketitle
\vspace*{-12mm}
\begin{abstract}

This work analyzes the asymptotic performances of fully distributed sequential hypothesis testing procedures as the type-I and type-II error rates approach zero, in the context of a sensor network without a fusion center. In particular, the sensor network is defined by an undirected graph, where each sensor can observe samples over time, access the information from the adjacent sensors, and perform the sequential test based on its own decision statistic. Different from most literature, the sampling process and the information exchange process in our framework take place simultaneously (or, at least in comparable time-scales), thus cannot be decoupled from one another. Our goal is to achieve order-$2$ asymptotically optimal performance at every sensor, i.e., the average detection delay is within a constant gap from the centralized optimal sequential test as the error rates approach zero. To that end, two message-passing schemes are considered, based on which the distributed sequential probability ratio test (DSPRT) is carried out respectively. The first scheme features the dissemination of the raw samples. In specific, every sample propagates over the network by being relayed from one sensor to another until it reaches all the sensors in the network. Although the sample propagation based DSPRT is shown to yield the asymptotically optimal performance at each sensor, it incurs excessive inter-sensor communication overhead due to the exchange of raw samples with index information. The second scheme adopts the consensus algorithm, where the local decision statistic is exchanged between sensors instead of the raw samples, thus significantly lowering the communication requirement compared to the first scheme. In particular, the decision statistic for DSPRT at each sensor is updated by the weighted average of the decision statistics in the neighbourhood at every message-passing step. We show that, under certain regularity conditions, the consensus algorithm based DSPRT also yields the order-$2$ asymptotically optimal performance at all sensors. Our asymptotic analyses of the two message-passing based DSPRTs are then corroborated by simulations using the Gaussian and Laplacian samples. 

\ignore{Distinct from many classic literature where sufficiently large number of messaging passings are implemented before the next sample arrives. That is, we consider the general case where $q$ message-passings are performed between two sampling instants, where $q=1$ corresponds to the ``running consensus'' protocol. We prove that, for fixed network size $K$, the message-passing based SPRT yields order-$2$ asymptotic optimality as the error probabilities go to zeros. For fixed network-wide Kullback-Leibler divergence, the consensus algorithm based SPRT yields order-$1$ asymptotic optimality as the network size $K$ grows large. Practical bounds for estimating the expected stopping time and error probabilities are also obtained. Extensive numerical results are demonstrated to validate the theoretical conclusions. }
\end{abstract}
\begin{IEEEkeywords}
Distributed sequential detection, sensor networks, sequential probability ratio test, stopping time, asymptotic optimality, message-passing, consensus algorithm.
\end{IEEEkeywords}

\newpage

\section{Introduction}

Following the optimal stopping rule in the data acquisition process, the sequential hypothesis test is able to reduce the data sample size compared to its fixed-sample-size counterpart. The sequential framework is particularly essential for systems where data are acquired in real-time, and the decision latency is of critical importance. In particular, for the simple null versus simple alternative hypothesis testing, the sequential probability ratio test (SPRT)  attains the minimum expected sample sizes under both hypotheses \cite{Wald48,SeqA_book}. For example, it only requires one fourth of the sample sizes on average as that of the fixed-sample-size test for detecting the mean-shift of Gaussian samples  \cite{Poor_book}. 

 Meanwhile, the recent decade has witnessed the surge of smart devices that can be connected through wireless links and form cooperative networks, giving rise to the emerging Internet of Things (IoT). Some examples include the body network where wearable devices are connected to the smartphone for health monitoring, the vehicular Ad Hoc network (VANET) as part of the intelligent transportation system, and the social network that connects people through online friendship. Many applications pertaining to these examples involve choosing between two hypotheses with stringent requirements on the decision latency, necessitating solutions that can integrate the sequential hypothesis test into the cooperative networks. For instance, VANETs can cooperatively detect the road congestion in a timely fashion; or social networks can determine whether a restaurant is good or bad with the help of the so-called collective wisdom.  
 
There are primarily two prototypes of network architectures, depending on whether or not there exists a central processing unit, or fusion center. In the presence of a fusion center, the network features a hierarchical structure (c.f., Fig. \ref{fig:sys}-(a)), i.e., all sensors directly transmit data to the fusion center, where the data fusion and sequential test are performed. The body network mentioned above falls under this category, usually with the smartphone functioning as the fusion center. Other variants of the hierarchical network include trees and tandem networks \cite{Blum97}.  The main challenge associated with the hierarchical network arises from  the communication burden from sensors to the fusion center. There is a rich body of studies that aim to ameliorate the communication overhead while preserving the  collaborative performance of the sequential hypothesis test\cite{Veeravalli93,Tsitsiklis93,Veeravalli94,LiLi15,Mei08,Wang13,Fellouris11} and the sequential change-point detection \cite{SLi14_SG,SLi16_TIFS}. 
 \begin{figure}
\centering
\subfigure[Hierarchical system]{
\includegraphics[width=0.36\columnwidth]{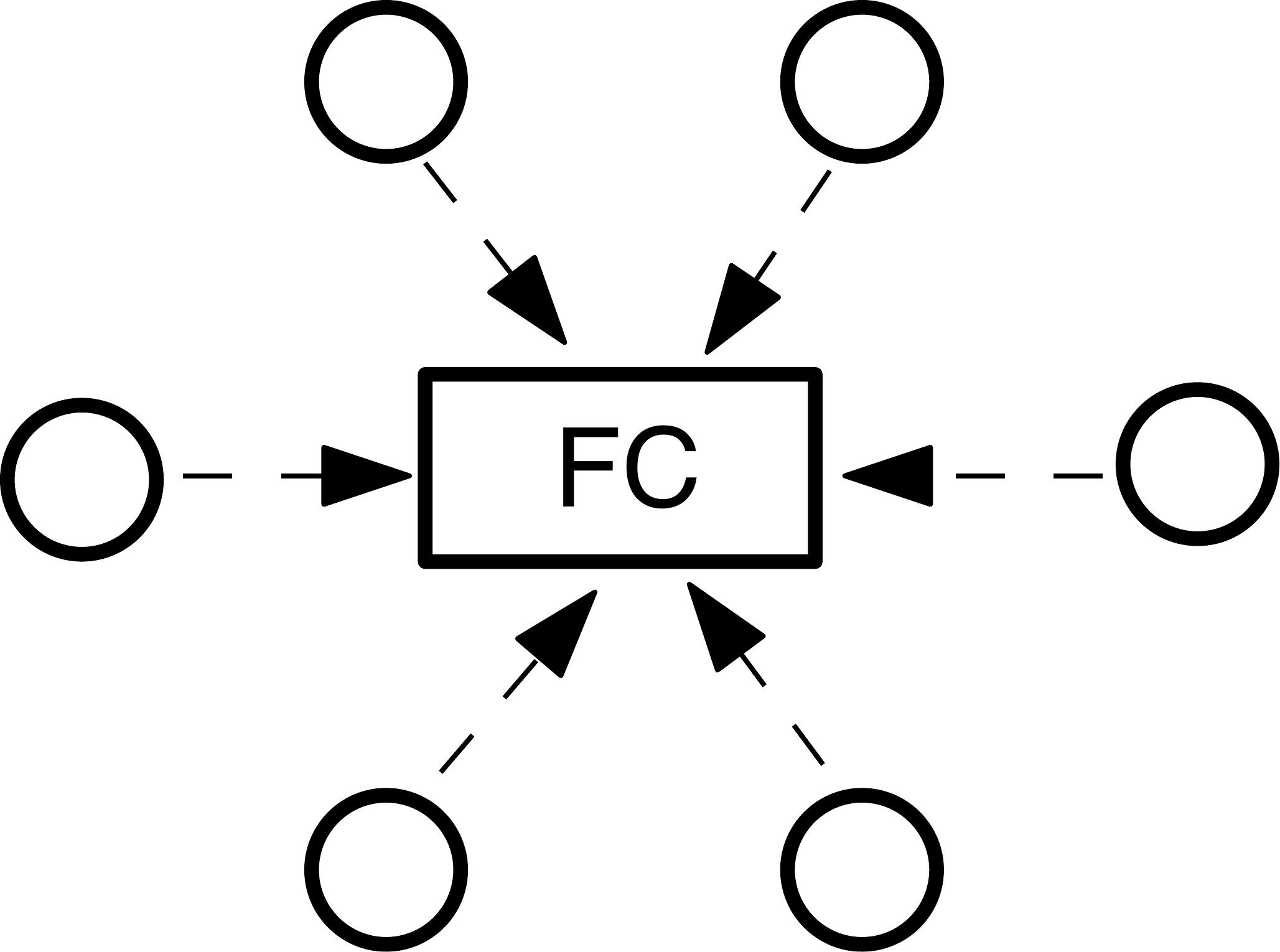}}\hspace*{12mm}
\subfigure[Distributed system]{
\includegraphics[width=0.36\columnwidth]{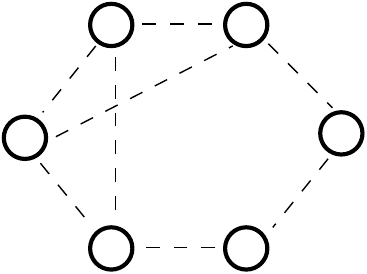}}
\caption{Illustration of the two types of sensor network architectures.}\label{fig:sys}
\end{figure}

In spite of its simple structure, the hierarchical network suffers from several limitations. First, it is susceptible to the fusion center malfunctioning. Second, it becomes very inefficient in the networks where there is no fusion center and every sensor needs to  function as a decision-maker. A typical example is the VANET, where each vehicle is able to make individual decision by exchanging data with other vehicles within its communication range. 
Accordingly, the distributed architecture (c.f., Fig. \ref{fig:sys}-(b)) is more natural and efficient in this case. In specific, the sensors are connected by wireless links, which allow  them to exchange data, and each sensor makes distributed decision based on its own available information. However, compared to the hierarchical network, the distributed network is prone to sub-optimal cooperative performance due to the lack of global information at each sensor. Therefore, the key challenge is to devise efficient information exchange mechanisms such that each sensor can optimize its distributed sequential test, and, if feasible, achieve the globally optimal performance. In this paper, we will consider two message-passing based distributed sequential tests, and prove their asymptotic optimalities. 
\subsection{Overview}
Since the seminal work by DeGroot \cite{DeGroot74}, the information aggregation in distributed networks has been widely studied. A majority of the existing literature builds on the fixed-sample-size paradigm. That is, each sensor starts with a private sample and aims to obtain the average of all private samples in the system (termed as ``reaching consensus'') through inter-sensor information exchange. The most popular information exchange protocols include the ``consensus algorithm'' and ``gossip algorithm'', whose comprehensive surveys can be found in  \cite{Saber07} and \cite{Dimakis10} respectively. More sophisticated scenario involving quantized message-passing and random link failures was investigated by \cite{Kar10}. In these works, a new sample is not allowed to enter into the network during the process of ``reaching consensus'', thus they are only relevant to the fixed-sample-size inference problems. 

\ignore{\cite{Xiao04}: Given the topology, design the weights such that the algebraic connectivity is maximized through a semi-definite convex programming. This leads to mild increase of algebraic connectivity. 
\cite{Saber05Conf}: Provided an approach that randomly generates the ``small-world'' network, which yields significantly larger algebraic connectivity giving rise to a ultrafast consensus rate. {\color{red} This could be used to generate our random experiment graph.}}

In contrast, the distributed sequential inference problem, where the complication arises from the successively arriving samples, is much less understood. Preliminarily, some existing works tackle this challenge by assuming that the consensus is reached before new samples are taken, which essentially decouples the sampling and the information aggregation processes and reduces the problem to the fixed-sample-size category \cite{ROSaber05,Khan08,Amini13,Ziyu16}. 
The more practical and interesting scenario is that the sampling and information aggregation processes take place simultaneously, or at least in comparable time-scales.  Under this setup, \cite{Kar13Mag} proposed the  ``consensus + innovation'' approach for distributed recursive parameter estimation; \cite{Braca08} intended to track a stochastic process using a ``running consensus'' algorithm. The same method was then  applied to the distributed locally optimal sequential test in \cite{Braca10}, where the alternative parameter is assumed to be close to the null one.
Moreover, the distributed sequential change-point detection was also investigated based on the concept of ``running consensus'' \cite{Stankovic11,BracaCD11,IlicGCD12}. 

While most of the above works focus on reaching (near) consensus on the value of local decision statistics, limited light has been shed upon the expected sample size, i.e., stopping time, and error probabilities of the distributed sequential test. Recently,  \cite{Srivastava14,Srivastava15Conf} analyzed the distributed sequential test  based on diffusion process (the continuous-time version of the consensus algorithm). For the  discrete-time model, \cite{Sahu16} used the ``consensus + innovation'' approach in combination with the sequential probability ratio test to detect the mean-shift of Gaussian samples. Closed-form bounds for the error probabilities and expected sample sizes of the distributed sequential test are derived. However, their analyses are restricted to one specific  testing problem, and do not reveal any asymptotic optimality.\subsection{Contributions}
 In this work, we consider two message-passing based distributed sequential tests. One requires the exchange of raw samples between adjacent sensors, while the other adopts the consensus algorithm as in \cite{Sahu16}. To the best of our knowledge, this work is the first to show the asymptotic optimality of a fully distributed sequential hypothesis test procedure. 
 Again, we emphsize that, due to the constantly arriving samples, reaching consensus on the values of the decision statistics at all sensors is generally impossible. Rather, our ultimate goal is to achieve the global (asymptotically) optimal  performance at every sensor in the network. 
In particular, the main contributions are summarized as follows.
\begin{itemize}
\item We consider a new distributed sequential test in Section III based on sample propagation, which allows each sample to reach other sensors as quickly as possible. This scheme is proved to achieve the order-$2$ asymptotically optimal performance at all sensors. 
\item We investigate the consensus-algorithm-based distributed sequential test for a generic hypothesis testing problem, whereas \cite{Sahu16} considered the particular problem of detecting the  Gaussian mean-shift. Moreover, we allow multiple rounds of message-passing between two sampling instants instead of one round as in \cite{Sahu16}.
\item We derive tighter analytical bounds to characterize the consensus-algorithm-based distributed sequential test, which leads to the order-$2$ asymptotic optimality. Our analyses also reveals that the constant gap to the optimal centralized performance can be reduced by increasing the number of message-passings between two adjacent sampling instants.
\end{itemize}
The remainder of the paper is organized as follows. Section II formulates the distributed sequential hypothesis testing problem. In Section III, we consider the distributed sequential test based on sample propagation and prove its asymptotic optimality. In Section IV, we prove the asymptotic optimality of the consensus-algorithm-based distributed sequential test. In Section V, simulation results based on Gaussian and Laplacian samples are given to corroborate the theoretical results. Finally, Section VI concludes the paper.  
\section{Problem Statement and Background}
Consider a network of $K$ sensors that sequentially take samples in parallel. Conditioned on the hypothesis, these samples are independent and identically distributed at each sensor and independent across sensors, i.e., 
\begin{align*}
&\cH_0: X^{(k)}_t\sim f_0^{(k)}(x), \\
&\cH_1: X^{(k)}_t\sim f_1^{(k)}(x), \quad k=1, 2, \ldots, K, \quad t=1, 2, \ldots
\end{align*}
The log-likelihood ratio (LLR) and the cumulative LLR up to time $t$ are denoted respectively as
\begin{align}
s_t^{(k)}\triangleq\log \underbrace{\frac{f^{(k)}_1(X^{(k)}_t)}{f^{(k)}_0(X^{(k)}_t)}}_{l^{(k)}_t}, \;\text{and}\; S_t^{(k)}\triangleq\sum_{j=1}^t s_j^{(k)}.
\end{align} 
The inter-sensor communication links determine the network topology, which can be represented by an undirected graph ${\cal G}\triangleq \{{\cal N}, {\cal E}\}$, with ${\cal N}$ being the set of sensors and ${\cal E}$ the set of edges.  In addition, let ${\cal N}_k$ be the set of neighbouring sensors that are directly connected to sensor $k$, i.e., $${\cal N}_k\triangleq \{j\in {\cal N}: \{k, j\}\in {\cal E}\}.$$ 
In distributed sequential test, at every time slot $t$ and each sensor $k$, the following actions take place in order: 1) taking a new sample, 2) exchanging messages with neighbours, and 3) deciding to stop for decision or to wait for more data at time $t+1$. Note that the first two actions, i.e., sampling and communication will continue even after the local test at sensor $k$ stops so that other sensors can still benefit from the same sample diversity, until all sensors stop. Mathematically, three components are to be designed for the distributed sequential test at each sensor:
\begin{itemize}
\item Exchanged messages: We denote the information transmitted from sensor $k$ to its adjacent sensors at time $t$ as $\cV_t^{(k)}$. In general, $\cV_t^{(k)}$ can be a set of numbers that depend on 
\begin{align}\label{im_constraint}
\left\{X_1^{(k)},\ldots, X_t^{(k)}, \left\{\cV_1^{(\ell)}\right\}_{\ell\in {\cal N}_k}, \ldots, \left\{\cV_{t-1}^{(\ell)}\right\}_{\ell\in {\cal N}_k}\right\}
\end{align} 
due to the distributed and causal assumptions. 


\item Stopping rule: The test stops for decision according to a stopping time random variable $\T$ that is adapted to the local information, i.e., 
\begin{align}\label{GeneralStopping}
\T^{(k)}\sim \left\{X_t^{(k)}, \left\{\cV_t^{(\ell)}\right\}_{\ell\in {\cal N}_k}\right\}_{t\in\mathbb{N}^+}. 
\end{align} 
Since we consider deterministic stopping rules, \eqref{GeneralStopping} means that $$\Prob\lb\left. \T^{(k)}\le t\right|X^{(k)}_1, \{\cV_1^{(\ell)}\}_{\ell\in {\cal N}_k}, \ldots, X_t^{(k)}, \{\cV_t^{(\ell)}\}_{\ell\in {\cal N}_k}\rb\in\{0, 1\}.$$ 
\item Decision function: Upon stopping at time $\T^{(k)}=t$, the terminal decision function chooses between the two hypotheses, i.e., 
\begin{align}\label{GeneralDecision}
D^{(k)}_{t}: \{X^{(k)}_1, \{\cV_1^{(\ell)}\}_{\ell\in {\cal N}_k}, \ldots, X_{t}^{(k)}, \{\cV_{t}^{(\ell)}\}_{\ell\in {\cal N}_k}\}\to \{0, 1\}.
\end{align}
For notational simplicity, we will omit the time index and use $D^{(k)}$ henthforth.
\end{itemize}
Accordingly, two performance metrics are used, namely, the expected stopping times $\E_i\T^{(k)}, \;i=0,1,$ and the type-I and type-II error probabilities, i.e., $\Prob_0\lb D^{(k)}=1\rb$ and $\Prob_1\lb D^{(k)}=0\rb$ respectively. The expected stopping times represent the average sample sizes under both hypotheses, and the error probabilities characterize the decision accuracy. As such, for the distributed sequential hypothesis testing, we aim to find the message design, stopping rule $\T^{(k)}$ and terminal decision function $D^{(k)}$ such that the expected stopping times at sensors under ${\cal H}_0$ and ${\cal H}_1$ are minimized subject to the error probability constraints:
\begin{align}\label{ST}
&\;\;\min_{\{\T^{(k)}, D^{(k)}, \{\cV_t^{(\ell)}\}_{\ell\in{\cal N}_k}\}}\;\quad\E_i\lb\T^{(k)}\rb, \quad i=0, 1\\
&\qquad\;\text{subject to} \quad\qquad \Prob_0\lb D^{(k)}=1\rb\le \alpha, \nonumber\\ &\phantom{\qquad subject to\quad\qquad}\; \Prob_1\lb D^{(k)}=0\rb\le \beta, \quad k = 1, 2, \ldots, K. \nonumber
\end{align}
Note that an implicit constraint in \eqref{ST} is given by the recursive definition of $V_t^{(k)}$ in \eqref{im_constraint}. Moreover, the above optimization is coupled across sensors due to the coupling of $V_t^{(k)}$.

Solving \eqref{ST} at the same time for $k=1, 2, \ldots, K$ is a formidable task except for  some special cases (for example, the fully connected network where all sensor pairs are connected, or the completely disconnected network where no two sensors are connected); therefore the asymptotically optimal solution is the next best thing to pursue.  We first introduce the widely-adopted definitions for the asymptotic optimalities \cite{Fellouris11}:
\begin{definition}\label{Def}
Let $\T^\star$ be the stopping time of the optimum sequential test that satisfies the two error probability constraints with equality. Then, as the Type-I and Type-II error probabilities $\alpha, \beta\to 0$, the sequential test that satisfies the error probability constraints with stopping time $\T$ is said to be order-$1$ asymptotically optimal if
\begin{align*}
1\le \frac{\E_i\lb\T\rb}{\E_i\lb\T^\star\rb}=1+o_{\alpha, \beta}(1);
\end{align*}
order-$2$ asymptotically optimal if
\begin{align*}
0\le {\E_i\lb\T\rb}-{\E_i\lb\T^\star\rb}=O(1).
\end{align*}
\end{definition}
Clearly, the order-$2$ asymptotic optimality is stronger than the order-$1$ asymptotic optimality since the  expected stopping time of the latter scheme can still diverge from the optimum, while the former scheme only deviates from the optimum by a constant as the error probabilities go to zero. 



Aiming at the asymptotically optimal solution, we start by finding a lower bound to \eqref{ST}. To this end, let us first consider the ideal case where the network is fully connected, i.e., ${\cal N}_k={\cal N}\,\backslash{\{k\}}$ for $k=1, 2, \ldots, K$. Then by setting $\cV_t^{(k)}=\{X_t^{(k)}\}, \; k=1, 2, \ldots, K$, every sensor can instantly obtain all data in the network, hence the system is equivalent to a centralized one. Consequently, given the error probability constraints, we can write
\begin{align}\label{optimalitybound}
\min_{\{\T^{(k)}, D^{(k)}, \{\cV_t^{(\ell)}\}_{\ell\in{\cal N}_k}\}}\;\E_i\lb\T^{(k)}\rb\ge \min_{\{\T^{(k)}, D^{(k)}, \{X_t^{(\ell)}\}_{\ell\in{\cal N}}\}}\;\E_i\lb\T^{(k)}\rb=\min_{\{\T, D\}}\;\lb\E_i\T\rb,
\end{align}
where $\T$ denotes the stopping time for the sequential test when all samples in the network are instantly available (referred to as the centralized setup). Naturally, invoking the classic result by \cite{Wald48}, $\min_{\{\T, D\}}\;\E_i\T$ in \eqref{optimalitybound} is solved with the centralized SPRT (CSPRT):
\begin{align}\label{CSPRT}
\T_c\triangleq \min \left\{t: {S}_t\triangleq\sum_{k=1}^KS_t^{(k)} \notin (-A, B)\right\},\quad D_c\triangleq \left\{
\begin{array}{ll}
1 & \text{if}\quad S_{\T_c}\ge B,\\
0 & \text{if}\quad S_{\T_c}\le -A,
\end{array}\right.
\end{align}
where $\{A,B\}$ are constants chosen such that the constraints in \eqref{ST} are satisfied with equalities. 
The asymptotic performance for the CSPRT as the error probabilities go to zero can be characterized by the following result \cite{SeqA_book}.
\begin{proposition}
The asymptotic performance of the CSPRT is characterized as
\begin{align}\label{GlobalOptimal}
\E_1\lb\T_c\rb=\frac{-\log\alpha}{\sum_{k=1}^K{\cal D}^{(k)}_1}, \quad \E_0\lb\T_c\rb=\frac{-\log\beta}{\sum_{k=1}^K{\cal D}^{(k)}_0}, \quad\text{as}\;\; \alpha, \beta \to 0,
\end{align}
where $\cD_i^{(k)}\triangleq \E_i\lb\log \frac{f^{(k)}_i(X)}{f^{(k)}_{1-i}(X)}\rb$ is the Kullback-Leibler divergence (KLD) at sensor $k$.
\end{proposition}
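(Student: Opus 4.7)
The plan is to apply Wald's identity to the aggregate log-likelihood ratio $S_t = \sum_{k=1}^K S_t^{(k)}$. By the cross-sensor conditional independence, $S_t$ is a random walk whose increment $\sum_{k=1}^K s_t^{(k)}$ has mean $\sum_{k=1}^K \cD_1^{(k)} > 0$ under $\cH_1$ and $-\sum_{k=1}^K \cD_0^{(k)} < 0$ under $\cH_0$. Since $\T_c$ is the first-exit time of a nonzero-drift random walk from the open interval $(-A, B)$, it is almost surely finite with finite expectation, so Wald's identity yields
\begin{equation*}
\E_1[\T_c] = \frac{\E_1[S_{\T_c}]}{\sum_{k=1}^K \cD_1^{(k)}}, \qquad \E_0[\T_c] = \frac{-\E_0[S_{\T_c}]}{\sum_{k=1}^K \cD_0^{(k)}}.
\end{equation*}
The problem then reduces to showing $\E_1[S_{\T_c}] \sim -\log\alpha$ and $-\E_0[S_{\T_c}] \sim -\log\beta$ as $\alpha, \beta \to 0$.

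Next, I would tie the boundaries $(A, B)$ to the error probabilities via Wald's change-of-measure argument: because the likelihood ratio equals $e^{S_t}$, integrating $e^{-S_t}$ against $\mathbf{1}\{D_c = 1\}$ and $e^{S_t}$ against $\mathbf{1}\{D_c = 0\}$ gives the standard inequalities $\alpha \le e^{-B}$ and $\beta \le e^{-A}$. When the error constraints in \eqref{ST} are saturated, the overshoots past each boundary contribute only $O(1)$ corrections, so $B = -\log\alpha + O(1)$ and $A = -\log\beta + O(1)$.

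Finally, decomposing $S_{\T_c} = B\,\mathbf{1}\{D_c=1\} - A\,\mathbf{1}\{D_c=0\} + R$ with $R$ denoting the boundary overshoot, and using $\Prob_1(D_c = 1) = 1 - \beta \to 1$, I would conclude $\E_1[S_{\T_c}] = B + O(1) = -\log\alpha + O(1)$, with the symmetric identity under $\cH_0$. Dividing by the respective drifts, the lower-order $O(1)$ terms become negligible relative to $|\log\alpha|$ and $|\log\beta|$, producing the asymptotic equivalences claimed in \eqref{GlobalOptimal}.

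The main obstacle is controlling the overshoot: rigorously establishing $\E_i[R] = O(1)$ typically requires a renewal-theoretic argument (e.g., Lorden's inequality for random walks) or a mild moment condition on the per-sensor increments $s_t^{(k)}$. Since the proposition only asserts leading-order behavior, this is standard SPRT machinery, and I would simply invoke \cite{SeqA_book} rather than re-derive the bound in detail.
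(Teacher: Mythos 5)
Your proposal is correct and is essentially the same argument the paper relies on: the paper states Proposition~1 without proof, citing \cite{SeqA_book}, and the machinery you describe (Wald's identity for the first-exit time of the aggregate LLR random walk, the change-of-measure/martingale bounds $\alpha\le e^{-B}$ and $\beta\le e^{-A}$, and the $O(1)$ overshoot terms handled by renewal theory) is exactly what the authors deploy explicitly in the proof of Theorem~\ref{thm0} (cf.\ \eqref{sd_eq5}--\eqref{MI_Error1}). No gaps worth flagging; deferring the overshoot bound to the standard reference is consistent with the paper's own treatment.
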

Proposition 1 gives the globally optimal performance that can only be achieved in the centralized step, whereas, in reality, the network is often a sparse one, far from being fully connected. Nevertheless, $\T_c$ will be used as a benchmark to evaluate our proposed distributed sequential tests in the next two sections. More specifically, by \eqref{optimalitybound}, we have 
\begin{align}\label{lowerbound}
\min_{\{\T^{(k)}, D^{(k)}, \{\cV_t^{(\ell)}\}_{\ell\in{\cal N}_k}\}}\;\E_i\lb\T^{(k)}\rb\ge \min_{\{\T, D\}}\;\E_i\lb\T\rb= \E_i\lb\T_c\rb;
\end{align} 
therefore, if any distributed sequential test attains the globally optimal performance given by \eqref{GlobalOptimal} in the sense defined by Definition \ref{Def} at all sensors, it is  asymptotically optimal.

A naive approach is to perform the local distributed SPRT (L-DSPRT), which adopts the same message-passing as the centralized test $\cV_t^{(k)}=\{X_t^{(k)}\}$. Hence the general definition of the stopping time in \eqref{GeneralStopping} becomes $\T^{(k)}\sim \{X_t^{(\ell)}, \ell\in \{{k,\cal N}_k\}\}_{t\in \mathbb{N}^+}$, i.e., the event $\{\T^{(k)}\le t\}$ (or its complementary event $\{\T^{(k)}> t\}$) only depends on $\{X_j^{(\ell)}, \ell\in \{{k,\cal N}_k\}\}_{j=1, \ldots, t}$, and the L-DSPRT is defined as
\begin{align}\label{LSPRT}
\T_\text{local}^{(k)}\triangleq \min \left\{t: \!\!\sum_{\ell\in \{k, {\cal N}_k\}}{S}^{(\ell)}_t \notin (-A, B)\right\},\;\; D_\text{local}^{(k)}\triangleq \left\{
\begin{array}{ll}
1 & \text{if}\quad  \sum_{\ell\in \{k, {\cal N}_k\}}{S}^{(\ell)}_{\T_\text{local}^{(k)}}\ge B,\\
0 & \text{if}\quad  \sum_{\ell\in \{k, {\cal N}_k\}}{S}^{(\ell)}_{\T_\text{local}^{(k)}}\le -A.
\end{array}\right.
\end{align}
Similarly, the asymptotic performance for L-DSPRT is readily obtained as
\begin{align}
\E_1\lb\T_\text{local}^{(k)}\rb=\frac{-\log\alpha}{\sum_{\ell\in \{k, {\cal N}_k\}}{\cal D}^{(\ell)}_1}, \quad \E_0\lb\T_\text{local}^{(k)}\rb=\frac{-\log\beta}{\sum_{\ell\in \{k, {\cal N}_k\}}{\cal D}^{(\ell)}_0}, \quad\text{as}\;\; \alpha, \beta \to 0.
\end{align}
Thus, compared with \eqref{GlobalOptimal}, $\T_\text{local}$ is sub-optimal in general, and may deviate substantially from the globally optimal performance, especially for the sensor with a small set of neighbours. 

In the next two sections, we will consider two message-passing-based distributed sequential tests, and show that they achieve order-$2$ asymptotic optimality (i.e., only deviate from \eqref{GlobalOptimal} by a constant), thus solving the distributed sequential hypothesis testing problem \eqref{ST} in the asymptotic regime where $\alpha, \beta\to 0$.


\section{Asymptotic Optimality of Distributed Sequential Test via Sample Dissemination}
In this section, we consider the first distributed sequential test based on sample dissemination. Simply put, in this scheme, every sample (or equivalently, the LLR of the sample) propagates through the network until it reaches all sensors. To some extent, it  resembles the scheme in \cite{DiLi15}, which, however, treats the message-passing and sequential test in decoupled manner. In our scheme, these two processes take place at the same time.

In order for the samples to reach all sensors, every new sample at one sensor needs to be relayed to the adjacent sensors at every message-passing step. These new samples include the newly collected sample and the external samples that come from the neighbours and have not been received before. \ignore{The sample that has been received before can be disregarded, since it is already stored and passed down to other sensors.} To implement this dissemination process, an implicit assumption is made that the samples are sent with index information such that they can be distinguished from one another. As indicated by the sub- and super-script of $s_t^{(k)}$, the index should include the sensor index $k$ that collects the sample and the time stamp $t$.  Overall, during the message-passing stage, each sensor needs to broadcast to its neighbours an array of messages, each of which is a sample with index information. 

To start with, we define two important quantities. The first is the information set ${\cal M}_t^{(k)}$ that contains all samples stored at sensor $k$ up to time $t$, which include  both local samples and external samples. For example, in set ${\cal M}_2^{(1)}=\left\{s_1^{(1)}, s_2^{(1)}, s_1^{(2)}, s_2^{(2)}, s_1^{(3)}\right\}$, $\{s_1^{(1)}, s_2^{(1)}\}$ are local samples, and $\{s_1^{(2)}, s_2^{(2)}, s_1^{(3)}\}$ are external samples  from sensors $2$ and $3$. The second is the message set ${\cal V}_t^{(k)}$ whose general form is given by \eqref{im_constraint}. In the sample dissemination scheme, they can be recursively updated as follows.
\begin{enumerate}
\item 
Sensor $k$ sends to the adjacent sensors the innovation $s_t^{(k)}$ and new external samples at last time $t-1$:
\begin{align}\label{sd1}
\cV_{t}^{(k)}\triangleq  \{s_{t}^{(k)}\}\cup \underbrace{\lb\cM_{t-1}^{(k)}- \cM_{t-2}^{(k)}- \{s_{t-1}^{(k)}\}\rb}_{\text{New external samples}}
\end{align}
where $\mathcal{A}- \mathcal{B}$ denotes the complementary set to $\mathcal{B}$ in $\mathcal{A}$.
\item  Sensor $k$ updates its information set with the innovation $s_t^{(k)}$ and  the messages from its neighbours, i.e, $\cup_{\ell\in \cN_k} \cV_t^{(\ell)}$:
\begin{align}\label{sd2}
\cM_{t}^{(k)}= \cM_{t-1}^{(k)}\cup \{s_t^{(k)}\}\cup_{\ell\in \cN_k} \cV_t^{(\ell)}, \quad \cM_{0}^{(k)}=\emptyset.
\end{align}
\end{enumerate}
In essence, each sensor stores new LLRs and relays them in the next time slot to its neighbours except that the newly collected sample is transmitted immediately at the same time slot. This is due to the setup that the sampling occurs before the message-passing within each time slot.   

Then the sample-dissemination-based distributed SPRT (SD-DSPRT) is performed at each sensor with the following stopping time and decision function:
\begin{align}\label{SHT_MI}
\T^{(k)}_\text{sd} \triangleq \min\left\{t: \zeta_t^{(k)}\triangleq \sum_{s\in \cM_t^{(k)}}\! \! s\;\notin \; (-A, B)\right\}, \quad D^{(k)}_\text{sd}\triangleq\left\{
\begin{array}{ll}
1, & \text{if}\;\;\sum_{s\in \cM_{\T^{(k)}_\text{sd}}^{(k)}}s\ge B,\\
0, &\text{if}\;\; \sum_{s\in \cM_{\T^{(k)}_\text{sd}}^{(k)}}s\le -A.
\end{array}\right.
\end{align}
Clearly, since the sample dissemination and the sequential test occur at the same time, $\cM_t^{(k)}\neq \{\{s_j^{(1)}\}_{j=1}^t, \{s_j^{(2)}\}_{j=1}^t, \ldots, \{s_j^{(K)}\}_{j=1}^t\}$ in general for $k=1, 2, \ldots, K$. In other words, the samples suffer from latency to reach all sensors in the network, which will potentially degrade the performance of $\T_\text{sd}^{(k)}$ compared to $\T_c$.  Note that the sample dissemination scheme under consideration may not provide the optimal routing strategy with respect to communication efficiency, but it guarantees that each sample is received by every sensor with least latency, which is beneficial in terms of minimizing the stopping time. In particular,
the information set at sensor $k$ and time $t$ is given by 
\begin{align}\label{claim1}
\cM_t^{(k)}=\left\{s^{(\ell)}_{\lb j-\nu_{\ell \to k}+1\rb^+}, \;\text{for}\; \; \ell=1, 2, \ldots, K\; \;\text{and}\; \; j = 1, 2, \ldots, t\right\},
\end{align} 
where  $\nu_{\ell \to k}$ is the length (number of links) of the shortest path from sensor $\ell$ to $k$, and $s^{(\ell)}_{0}\triangleq 0$ and $\nu_{k\to k}\triangleq 1$ for  notational convenience in the subsequent development.

The next result shows that the SD-DSPRT is order-$2$ asymptotically optimal. 
\begin{theorem}\label{thm0}
The asymptotic performance of the SD-DSPRT as $\alpha, \beta\to 0$ is characterized by
\begin{align}\label{MI_Opt}
\E_1\lb\T_\textnormal{sd}^{(k)}\rb\le \frac{-\log \alpha}{\sum_{k=1}^K{\cal D}_1^{(k)}}+O(1), \quad \E_0\lb\T_\textnormal{sd}^{(k)}\rb\le \frac{-\log \beta}{\sum_{k=1}^K{\cal D}_0^{(k)}}+O(1), \quad k=1, 2, \ldots, K.
\end{align}
\end{theorem}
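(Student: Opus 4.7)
The plan is to interpret the SD-DSPRT's decision statistic $\zeta_t^{(k)}$ as the \emph{exact} log-likelihood ratio of the subset of samples that have actually reached sensor $k$ by time $t$, and then to adapt the classical Wald--SPRT analysis to the staggered sample arrivals. Substituting \eqref{claim1} into the definition of $\zeta_t^{(k)}$ gives
\begin{equation}
\zeta_t^{(k)} = \sum_{\ell=1}^{K} S_{N_t^{(\ell)}}^{(\ell)}, \qquad N_t^{(\ell)} \triangleq (t-\nu_{\ell\to k}+1)^+,
\end{equation}
so $\zeta_t^{(k)}$ is the LLR of $\cM_t^{(k)}$ and hence $\{e^{-\zeta_t^{(k)}}\}$ and $\{e^{\zeta_t^{(k)}}\}$ are supermartingales under $\cH_1$ and $\cH_0$ respectively with respect to the filtration $\cF_t^{(k)} \triangleq \sigma(\cM_t^{(k)})$. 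Standard SPRT bounds then imply that the choices $B = \log\frac{1-\beta}{\alpha}$ and $A = \log\frac{1-\alpha}{\beta}$ enforce the error constraints in \eqref{ST}, with $B = -\log\alpha + O(1)$ and $A = -\log\beta + O(1)$.

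The heart of the argument is a Wald identity for $\zeta_{\T^{(k)}_\text{sd}}^{(k)}$. Working under $\cH_1$, define $M_t \triangleq \zeta_t^{(k)} - \sum_{\ell=1}^{K} N_t^{(\ell)}\,\cD_1^{(\ell)}$. For each $\ell$, the increment $N_{t+1}^{(\ell)}-N_t^{(\ell)} \in \{0,1\}$ is deterministic given $t$, and whenever it equals $1$ the corresponding sample $s^{(\ell)}_{N_{t+1}^{(\ell)}} = s^{(\ell)}_{t+2-\nu_{\ell\to k}}$ has a strictly larger time index than any sensor-$\ell$ sample in $\cF_t^{(k)}$, so it is independent of $\cF_t^{(k)}$ with mean $\cD_1^{(\ell)}$. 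Hence $M_t$ is a zero-mean $\cF_t^{(k)}$-martingale. Since $\nu_{k\to k}=1$, the drift of $\zeta_t^{(k)}$ is bounded below by $\cD_1^{(k)}>0$ at every $t$, so standard arguments yield $\E_1\T^{(k)}_\text{sd}<\infty$. Optional stopping, combined with the elementary inequality $(t-\nu+1)^+ \ge t-\nu+1$ and $\cD_1^{(\ell)}\ge 0$, then gives
\begin{equation}
\E_1 \zeta_{\T^{(k)}_\text{sd}}^{(k)} \;=\; \sum_{\ell=1}^{K} \E_1 N^{(\ell)}_{\T^{(k)}_\text{sd}}\,\cD_1^{(\ell)} \;\ge\; \E_1 \T^{(k)}_\text{sd} \sum_{\ell=1}^{K} \cD_1^{(\ell)} \;-\; \sum_{\ell=1}^{K} (\nu_{\ell\to k}-1)\,\cD_1^{(\ell)},
\end{equation}
and the subtracted term is a topology-dependent $O(1)$ constant, independent of $\alpha$ and $\beta$.

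It then remains to upper-bound $\E_1 \zeta_{\T^{(k)}_\text{sd}}^{(k)}$. Decomposing over the terminal decision and noting that wrong-direction stopping has probability at most $\beta$, a standard overshoot argument (available under moment conditions such as $\E_1 (s_t^{(\ell)})^2 < \infty$) yields $\E_1 \zeta_{\T^{(k)}_\text{sd}}^{(k)} \le B + O(1) = -\log\alpha + O(1)$. Substituting this into the previous display, dividing by $\sum_\ell \cD_1^{(\ell)}$, and absorbing the topology constant into the $O(1)$ term yields the $\cH_1$ bound in \eqref{MI_Opt}; the $\cH_0$ bound follows by the symmetric argument with $\cH_0$, $A$, and $-\log\beta$ replacing $\cH_1$, $B$, and $-\log\alpha$.

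The most delicate step is the martingale construction above. One must choose the filtration $\cF_t^{(k)}$ so that every newly arriving external sample is independent of its past, despite having been generated at sensor $\ell$ strictly \emph{before} time $t$. It is precisely this independence that makes the effective drift of $\zeta_t^{(k)}$ equal to the full centralized drift $\sum_\ell \cD_1^{(\ell)}$, so that the gap from the centralized SPRT collapses to the topology-dependent $O(1)$ constant and does not grow with $|\log\alpha|$; everything else is routine Wald-SPRT accounting.
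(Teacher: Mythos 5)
Your proposal is correct and follows essentially the same route as the paper: you identify $\zeta_t^{(k)}$ as the exact LLR of the delayed sample set, establish a Wald-type identity $\E_1\zeta_{\T}^{(k)}=\E_1(\T)\sum_\ell\cD_1^{(\ell)}-O(1)$ with the topology constant $\sum_\ell(\nu_{\ell\to k}-1)\cD_1^{(\ell)}$, bound $\E_1\zeta_{\T}^{(k)}\le B+O(1)$ by an overshoot argument, and control the error probabilities via the likelihood-ratio martingale so that $B=-\log\alpha+O(1)$. Your explicit martingale $M_t$ plus optional stopping is just a repackaging of the paper's tower-property/indicator computation, and your use of Wald's thresholds in place of the paper's Markov-inequality bound is equivalent.
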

\proof
On the account of the information set $\mathcal{M}_t^{(k)}$ in \eqref{claim1}, which is yielded by the sample dissemination process \eqref{sd1}-\eqref{sd2}, the decision statistic for SD-DSPRT at sensor $k$, i.e., the quantity $\zeta_t^{(k)}$ defined in \eqref{SHT_MI}, can be further written as
\begin{align}
\zeta_t^{(k)}=  \sum_{s\in \cM_t^{(k)}}\! \! s = \sum_{j=1}^t\sum_{\ell=1}^Ks^{(\ell)}_{\lb j-\nu_{\ell \to k}+1\rb^+}.
\end{align}
By noting that the stopping time at sensor $k$ is adapted to ${\cal M}_t^{(k)}$, i.e., the event $\{\T_\text{sd}^{(k)}\le t\}$ (or its complementary event $\{\T_\text{sd}^{(k)}> t\}$) is fully determined by $\cM_t^{(k)}$, we have
\begin{align}
&\E_i\lb \zeta_{\T_\text{sd}^{(k)}}^{(k)}\rb=\E_i\lb\sum_{j=1}^{\T_\text{sd}^{(k)}}\sum_{\ell=1}^Ks^{(\ell)}_{\lb j-\nu_{\ell \to k}+1\rb^+}\rb\nonumber\\=&\E_i\left[\sum_{j=1}^\infty\mathbbm{1}_{\{j\le {\T_\text{sd}^{(k)}}\}}\E_i\lb\left.\sum_{\ell=1}^Ks^{(\ell)}_{\lb j-\nu_{\ell \to k}+1\rb^+}\right|\cM_{j-1}^{(k)}\rb\right]\label{sd_eq1}\\=&\E_i\left[\sum_{j=1}^\infty\mathbbm{1}_{\{j\le {\T_\text{sd}^{(k)}}\}}\sum_{\ell=1}^K\E_i\lb\left.s^{(\ell)}_{\lb j-\nu_{\ell \to k}+1\rb^+}\right|\cM_{j-1}^{(k)}\rb\right]\nonumber\\=&\E_i\lb \sum_{j=1}^\infty\mathbbm{1}_{\{j\le {\T_\text{sd}^{(k)}}\}}\sum_{\ell=1}^K\underbrace{\E_i\lb s^{(\ell)}_{\lb j-\nu_{\ell \to k}+1\rb^+}\rb}_{\cD_i^{(\ell)}}\mathbbm{1}_{\{j\ge \nu_{\ell\to k}\}}\rb\label{sd_eq2}\\=&\E_i\left[\sum_{j=1}^\infty\left(\mathbbm{1}_{\{j\le \max_\ell\nu_{\ell\to k}\}}\sum_{\ell=1}^K\cD_i^{(\ell)}\mathbbm{1}_{\{j\ge \nu_{\ell\to k}\}}+\mathbbm{1}_{\{\max_\ell\nu_{\ell\to k}< j\le{\T_\text{sd}^{(k)}}\}}\sum_{\ell=1}^K\cD_i^{(\ell)}\mathbbm{1}_{\{j\ge \nu_{\ell\to k}\}}\right)\right]\label{sd_eq3}\\=&\E_i\left\{\sum_{j=1}^\infty\left[\mathbbm{1}_{\{j\le \max_\ell\nu_{\ell\to k}\}}\sum_{\ell=1}^K\cD_i^{(\ell)}\lb 1 -\mathbbm{1}_{\{j\le \nu_{\ell\to k}-1\}}\rb+\mathbbm{1}_{\{\max_\ell\nu_{\ell\to k}< j\le{\T_\text{sd}^{(k)}}\}}\sum_{\ell=1}^K\cD_i^{(\ell)}\right]\right\}\nonumber\\=&\E_i\left\{\sum_{j=1}^\infty\left[\underbrace{\lb\mathbbm{1}_{\{j\le \max_\ell\nu_{\ell\to k}\}}+\mathbbm{1}_{\{\max_\ell\nu_{\ell\to k}< j\le{\T_\text{sd}^{(k)}}\}}\rb}_{\mathbbm{1}_{\{ j\le \T_\text{sd}^{(k)}\}}}\sum_{\ell=1}^K\cD_i^{(\ell)}-\mathbbm{1}_{\{j\le \max_\ell\nu_{\ell\to k}\}}\sum_{\ell=1}^K\cD_i^{(\ell)}\mathbbm{1}_{\{j\le \nu_{\ell\to k}-1\}}\right]\right\}\nonumber\\=&\E_i\left[\sum_{j=1}^\infty\left(\mathbbm{1}_{\{ j\le \T_\text{sd}^{(k)}\}}\sum_{\ell=1}^K\cD_i^{(\ell)}-\sum_{\ell=1}^K\cD_i^{(\ell)}\mathbbm{1}_{\{j\le \nu_{\ell\to k}-1\}}\right)\right]\nonumber\\=&\E_i\lb\T_\text{sd}^{(k)}\sum_{\ell=1}^K\cD_i^{(\ell)}-\sum_{\ell=1}^K\cD_i^{(\ell)}\lb\nu_{\ell\to k}-1\rb\rb\nonumber\\=&\E_i\lb{\T_\text{sd}^{(k)}}\rb\sum_{\ell=1}^K\cD_i^{(\ell)}-\sum_{\ell=1}^K\lb\nu_{\ell\to k}-1\rb\cD_i^{(\ell)},\label{sd_eq4}
\end{align}
where \eqref{sd_eq1} holds due to Tower's property (i.e., $\E(X)= \E\left[\E\lb X|Y\rb\right]$) and the definition of the stopping time $\T_\text{sd}^{(k)}$; \eqref{sd_eq2} holds because $s^{(k)}_{0} = 0$ and $s^{(k)}_{\lb j-\nu_{\ell \to k}+1\rb^+}$ is independent of ${\cal M}_{j-1}^{(k)}$ due to \eqref{claim1}; \eqref{sd_eq3} is obtained by splitting $\mathbbm{1}_{\{j\le {\T_\text{sd}^{(k)}}\}}= \mathbbm{1}_{\{j\le \max_\ell\nu_{\ell\to k}\}} + \mathbbm{1}_{\{{\max_\ell\nu_{\ell\to k}<j\le \T_\text{sd}^{(k)}}\}}$. 

Under ${\cal H}_1$,  the local statistic $\zeta_{\T_\text{sd}^{(k)}}^{(k)}$ either hits the upper threshold (i.e., correct decision) with probability $1-\beta$ or the lower threshold (i.e. false alarm) with probability $\beta$. Thus its expected value upon stopping is expressed as 
\begin{align}\label{sd_eq5}
\E_1\lb \zeta_{\T_\text{sd}^{(k)}}^{(k)}\rb &=\beta \lb -A-\varsigma_0\rb+ (1-\beta) (B+\varsigma_1)\nonumber\\&\to B+O(1),\quad \text{as}\;\; A,B\to\infty,
\end{align}
where $\varsigma_i$'s  are the expected overshoots, which are constant terms (i.e., independent of $A, B$) that can be evaluated by renewal theory \cite{SeqA_book,Basseville}. Therefore, using \eqref{sd_eq4} and \eqref{sd_eq5}, we have
\begin{align}
&\E_1\lb\T_\text{sd}^{(k)}\rb=\frac{B}{\sum_{k=1}^K\cD_1^{(k)}}+\underbrace{\frac{\sum_{\ell=1}^K\lb \nu_{\ell\to k}-1\rb\cD_1^{(\ell)}+O(1)}{\sum_{k=1}^K\cD_1^{(k)}}}_{O(1)}.\label{MI_ET1}
\end{align}
Similarly, we can also obtain
\begin{align} 
\E_0\lb\T_\text{sd}^{(k)}\rb=\frac{A}{\sum_{k=1}^K\cD_0^{(k)}}+\underbrace{\frac{\sum_{\ell=1}^K\lb \nu_{\ell\to k}-1\rb\cD_0^{(\ell)}+O(1)}{\sum_{k=1}^K\cD_0^{(k)}}}_{O(1)}.\label{MI_ET0}
\end{align}

On the other hand, since $\zeta_{\T_\text{sd}^{(k)}}^{(k)}$ is the sum of independent LLRs, it is readily obtained by the Markov inequality that
\begin{align}
&\alpha\triangleq\Prob_0\lb \zeta_{\T_\text{sd}^{(k)}}^{(k)}\ge B\rb\le e^{-B}\,\E_0\left[ \exp\lb{\zeta_{\T_\text{sd}^{(k)}}^{(k)}}\rb\right]=e^{-B},\label{MI_Error0}\\
&\beta\triangleq \Prob_1\lb \zeta_{\T_\text{sd}^{(k)}}^{(k)}\le -A\rb\le e^{-A}\,\E_1\left[ \exp\lb{-\zeta_{\T_\text{sd}^{(k)}}^{(k)}}\rb\right]=e^{-A}.\label{MI_Error1}
\end{align}
The equalities in \eqref{MI_Error0} and \eqref{MI_Error1} follow from the optional sampling theorem \cite{Lawler} by noting that $\exp\lb{\zeta_{\T_\text{sd}^{(k)}}^{(k)}}\rb$ and $\exp\lb{-\zeta_{\T_\text{sd}^{(k)}}^{(k)}}\rb$ are martingales under ${\cal H}_0$  and ${\cal H}_1$ respectively. In specific, $\E_0\left[\exp\lb{\zeta_{\T_\text{sd}^{(k)}}^{(k)}}\rb\right]=\E_0\left[\exp\lb{\zeta_{0}^{(k)}}\right] \rb=1$, and $\E_1\left[\exp\lb{-\zeta_{\T_\text{sd}^{(k)}}^{(k)}}\rb\right]=\E_1\left[\exp\lb{-\zeta_{0}^{(k)}}\rb \right]=1$.

Combining \eqref{MI_ET1}-\eqref{MI_Error1} leads to the results in \eqref{MI_Opt}.
\endproof
\begin{remark}
According to \eqref{MI_ET0} and \eqref{MI_ET1} in the proof of Theorem \ref{thm0}, the condition that every sample reaches all sensors via the shortest paths is sufficient but not necessary for the order-$2$ asymptotic optimality. 
In particular, we can further relax $\nu_{\ell\to k}$ in \eqref{MI_ET0} and \eqref{MI_ET1} to be any finite number (i.e., samples travel from sensor $\ell$ to $k$ within finite number of hops), and still preserve the constant terms, which are essential for  the order-$2$ optimality. However, the resulting scheme yields larger constant deviation from the centralized test than that in the proposed scheme, thus is less efficient in terms of the stopping time.
\end{remark}
Note that the bounds in \eqref{MI_Error0} and \eqref{MI_Error1} provide accurate characterizations for the error probabilities, as shown in Section V. Therefore, in practice, the sequential thresholds can be set according to $A=-\log \beta$ and $B=-\log \alpha$.

\ignore{At the first glance, such a scheme requires increasing storage for the growing number of statistics throughout the network since ${\cal M}_t^{(k)}$ keeps growing with $t$. However, note that at time $t$, sensor $k$ has received LLRs generated at time $j<t- \max_\ell\nu_{\ell\to k}$ from all other sensors, and there is no point of distinguish them from one another since the decision statistic pertains to the sum of LLRs. In other words, it is only necessary to store LLRs from different sensors for $j= t, t-1, \ldots, t-\max_\ell\nu_{\ell\to k}$, the sum of all LLRs before $t-\max_\ell\nu_{\ell\to k}$, i.e., the maximum number of LLRs that is stored is $K\max_\ell\nu_{\ell\to k}$. 
For fixed network topology, $\max_\ell\nu_{\ell\to k}$ is a constant, thus a constant memory is needed to implement the SD-DSPRT algorithm.  }

Although the distributed sequential test SD-DSPRT achieves the order-$2$ asymptotically optimal performance at every sensor, it is at the cost of the significant communication overhead that arises from the exchange of sample arrays with the additional index information. In particular, an increase in the network size $K$ will significantly increase the dimension of sample array and the index information, making the sample dissemination practically infeasible. In the next section, we consider another message-passing based distributed sequential test that avoids the high communication overhead, yet still achieves the same order-$2$ asymptotic optimality at all sensors.

\section{Asymptotic Optimality of Distributed Sequential Test via Consensus Algorithm}
In this section, we consider the distributed sequential test based on the communication protocol known as  the consensus algorithm, in which the sensors exchange their local decision statistics instead of the raw samples (which is an array of messages), i.e., $\cV_t^{(k)}$ only contains a scalar. Moreover, we assume that $q$ rounds of message-passings can take place within each sampling interval. Denoting the decision statistic at sensor $k$ and time $t$ as $\eta_t^{(k)}$, then during every time slot $t$, the consensus-algorithm-based sequential test is carried out as follows:
\begin{enumerate}
\item Take a new sample, and add the LLR $s_t^{(k)}$ to the local decision statistic from previous time:
\begin{align}\label{protocol_1}
\widetilde \eta^{(k)}_{t, 0}=\eta^{(k)}_{t-1}+s_t^{(k)},
\end{align}
where $\widetilde \eta^{(k)}_{t, 0}$ is the intermediate statistic before message-passing, and we denote the statistic after $m$th message-passing as $\widetilde \eta^{(k)}_{t, m}, \; m= 0, 1, 2, \ldots, q$ which is computed in the next step.
\item For $m = 0, 1, 2, \ldots, q$, every sensor exchanges its local intermediate statistic $\widetilde \eta^{(k)}_{t, m}$ with the neighbours, and updates the local intermediate statistic as the weighted sum of the available statistics from the neighbours, i.e., 
\begin{align}\label{protocol_2}
\widetilde\eta^{(k)}_{t, m}=w_{k,k}\,\widetilde\eta^{(k)}_{t, m-1}+\sum_{\ell\in {\cal N}_k}w_{\ell, k}\, \widetilde \eta^{(\ell)}_{t, m-1},\;\; \text{for}\;\; m=1, 2, \ldots, q,
\end{align}
where the weight coefficients $w_{i,j}$ will be specified later.
\item Update the local decision statistic for time $t$ as $\eta^{(k)}_{t}=\widetilde\eta^{(k)}_{t, q}$.
\item Go to Step 1) for the next sampling time slot $t+1$.
\end{enumerate}

To express the consensus algorithm in a compact form, we define the following vectors: $$\widetilde\bEta_{t, m}\triangleq [\widetilde\eta^{(1)}_{t, m}, \widetilde\eta^{(2)}_{t, m}, \ldots, \widetilde\eta^{(K)}_{t, m}]^T, \;\; \bEta_t\triangleq [\eta^{(1)}_t, \eta^{(2)}_t, \ldots, \eta^{(K)}_t]^T,$$ $$\bs_t\triangleq [s^{(1)}_t, s^{(2)}_t, \ldots, s^{(K)}_t]^T.$$ Then each message-passing in \eqref{protocol_2} can be represented by\begin{align}
\widetilde\bEta^{(k)}_{t, m}=\bW\widetilde\bEta^{(k)}_{t, m-1}, \;\;\text{for}\;\; m=1, 2, \ldots, q,
\end{align}
where the matrix $\bW\triangleq (w_{i,j})\in \mathbb{R}^{K\times K}$ is formed by $w_{i,j}$'s defined in \eqref{protocol_2}. Combining \eqref{protocol_1} and \eqref{protocol_2}, the decision statistic vector  evolves over time according to 
\begin{align}\label{protocol_sys}
  \bEta_{t}=\bW^q\lb\bEta_{t-1}+\bs_{t}\rb, \quad \text{with}\;\; \bEta_0={\bf 0}.
\end{align}
Based on \eqref{protocol_sys}, the decision statistic vector at time $t$ can also be equivalently expressed as
\begin{align}\label{G_stats}
  \bEta_{t}=\sum_{j=1}^t\bW^{q\lb t-j+1\rb}\bs_j, \quad t=1, 2, \ldots.
\end{align}

As such, the consensus-algorithm-based distributed SPRT (CA-DSPRT) at sensor $k$ can be  implemented with the following stopping time and decision rule:
\begin{align}
  \T_\text{ca}^{(k)}\triangleq\inf\left\{t: \eta_t^{(k)}\notin (-A,B)\right\},\quad D^{(k)}_\text{ca}\triangleq \left\{
\begin{array}{ll}
1 & \text{if}\quad \eta^{(k)}_{\T_\text{ca}^{(k)}}\ge B,\\
0 & \text{if}\quad \eta^{(k)}_{\T_\text{ca}^{(k)}}\le -A,
\end{array}\right.
\end{align}
where $\{A, B\}$ are chosen to satisfy the error probability constraints.

Note that \eqref{protocol_sys} resembles the consensus algorithm in the {\it fixed-sample-size test} \cite{Braca10}, where no innovation are introduced, i.e., $ \bEta_{t}=\bW^q\bEta_{t-1}$. In that case, under certain regularity conditions for $\bW$, consensus is reached in the sense $\bEta_t\to \left[\frac{1}{K}\sum_{i=1}^K\eta^{(k)}_0, \ldots, \frac{1}{K}\sum_{i=1}^K\eta^{(k)}_0\right]^T$ as $t\to \infty$. In contrast, with the new samples constantly arriving, how such a message-passing protocol can affect the {\it sequential test} at each sensor has not been investigated in the literature. In the following subsection, we will show that the above CA-DSPRT enables every sensor to attain the order-$2$ asymptotically optimal test performance, instead of reaching consensus on teh decision statistics. 


\subsection{Order-$2$ Asymptotic Optimality of CA-DSPRT}
To begin with, we first impose the following two conditions on the weight matrix $\bW$ and the distribution of LLR respectively.
\begin{condition}\label{con1}
The weight matrix $\bW$ satisfies $$\bW{\bf 1}={\bf 1},\;\; {\bf 1}^T\bW={\bf 1}^T, \;\; 0< \sigma_\text{2}\lb \bW\rb <1,$$
where $\sigma_{i}\lb\bW\rb$ denotes $i$th singular value of $\bW$.
\end{condition}

\begin{condition}\label{con2}
The LLR for the hypothesis testing problem satisfies that $\E_i\lb e^{K\sqrt{K}|s_j^{(k)}|}\rb$ is bounded for $ i\in\{0, 1\}, \; k=1,  \ldots, K$.
\end{condition}
The first condition essentially regulates the network topology and weight coefficients in \eqref{protocol_sys}. If we further require $w_{i,j}\ge 0$, then Condition \ref{con1} is equivalent to $\bW$ being doubly stochastic. The second condition regulates the tail distribution of the LLR at each sensor, which in fact embraces a wide range of distributions, for example, the Gaussian and Laplacian distributions. 

\begin{theorem}\label{thm1}
Given that Conditions \ref{con1}-\ref{con2} are satisfied, the asymptotic performance of the CA-DSPRT as $\alpha, \beta\to 0$ is characterized by
\begin{align}
\E_1\lb\T_\text{ca}^{(k)}\rb\le \frac{-\log \alpha}{\sum_{k=1}^K{\cal D}_1^{(k)}}+\frac{\sigma_2^q(\bW)}{1-\sigma_2^q(\bW)}O(1), \quad \E_0\lb\T_\text{ca}^{(k)}\rb\le \frac{-\log \beta}{\sum_{k=1}^K{\cal D}_0^{(k)}}+\frac{\sigma_2^q(\bW)}{1-\sigma_2^q(\bW)}O(1).
\end{align}
Therefore, the CA-DSPRT achieves the order-$2$ asymptotically optimal solution to \eqref{ST} for $k=1, 2, \ldots, K$.
\end{theorem}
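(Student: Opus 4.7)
The plan is to compare the consensus statistic $\eta_t^{(k)}$ against the centralized log-likelihood ratio $S_t^c\triangleq \mathbf{1}^\T\sum_{j=1}^t \bs_j=\sum_{\ell=1}^K S_t^{(\ell)}$ and to argue that the two agree up to a bounded ``consensus error.'' From the explicit formula $\bEta_t=\sum_{j=1}^t \bW^{q(t-j+1)}\bs_j$ in \eqref{G_stats} and Condition~\ref{con1} (which makes $\tfrac{1}{K}\mathbf{1}\mathbf{1}^\T$ the spectral projector onto the leading eigenvector of $\bW$, with contraction factor $\sigma_2^{qn}(\bW)$ on its orthogonal complement), I would decompose
\begin{align*}
\eta_t^{(k)}=\tfrac{1}{K}S_t^c+e_t^{(k)},\qquad e_t^{(k)}=\sum_{j=1}^t\bigl(\bW^{q(t-j+1)}-\tfrac{1}{K}\mathbf{1}\mathbf{1}^\T\bigr)_{k,\cdot}\,\bs_j,
\end{align*}
where the row vector $\bigl(\bW^{qn}-\tfrac{1}{K}\mathbf{1}\mathbf{1}^\T\bigr)_{k,\cdot}$ has Euclidean norm at most $\sigma_2^{qn}(\bW)$. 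Since $S_t^c$ is precisely the random walk whose SPRT is characterized by Proposition~1, everything reduces to controlling $e_t^{(k)}$.

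For the expected stopping time, $\E_1\bs_j=(\cD_1^{(1)},\ldots,\cD_1^{(K)})^\T$ gives the uniform bound
\begin{align*}
|\E_1 e_t^{(k)}|\le\bigl\|(\cD_1^{(\ell)})_\ell\bigr\|_2\sum_{n=1}^t\sigma_2^{qn}(\bW)\le\frac{\sigma_2^q(\bW)}{1-\sigma_2^q(\bW)}\,O(1).
\end{align*}
Because $\T_\text{ca}^{(k)}$ is a stopping time on the sample filtration and $S_t^c$ is an i.i.d.\ random walk with drift $\sum_\ell\cD_1^{(\ell)}$, Wald's identity yields $\E_1 S_{\T_\text{ca}^{(k)}}^c=\E_1\T_\text{ca}^{(k)}\sum_\ell\cD_1^{(\ell)}$, and combining with the decomposition produces
\begin{align*}
\E_1\T_\text{ca}^{(k)}=\frac{K\bigl(\E_1\eta_{\T_\text{ca}^{(k)}}^{(k)}-\E_1 e_{\T_\text{ca}^{(k)}}^{(k)}\bigr)}{\sum_\ell\cD_1^{(\ell)}}\le\frac{KB+O(1)+\tfrac{\sigma_2^q(\bW)}{1-\sigma_2^q(\bW)}O(1)}{\sum_\ell\cD_1^{(\ell)}},
\end{align*}
with the $O(1)$ overshoot of $\eta_{\T_\text{ca}^{(k)}}^{(k)}$ above $B$ having bounded expectation courtesy of the exponential moment in Condition~\ref{con2}.

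To determine $B$, I would adapt the Chernoff/Wald argument for the error probability: establish $\sup_t\E_0\exp\bigl(K\eta_t^{(k)}\bigr)\le C$ for a finite constant $C$, then use an optional-stopping / Markov step to conclude $\Prob_0\bigl(\eta_{\T_\text{ca}^{(k)}}^{(k)}\ge B\bigr)\le Ce^{-KB}$. Enforcing the $\alpha$-constraint on the right then forces $KB\ge-\log\alpha+O(1)$, and substituting this into the stopping-time bound delivers the stated $\E_1\T_\text{ca}^{(k)}\le-\log\alpha/\sum_\ell\cD_1^{(\ell)}+\tfrac{\sigma_2^q(\bW)}{1-\sigma_2^q(\bW)}O(1)$. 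The $\cH_0$ bound follows symmetrically by exchanging the roles of the two hypotheses.

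The hard part will be the uniform bound $\sup_t\E_0 e^{K\eta_t^{(k)}}<\infty$. Expanding,
\begin{align*}
\E_0 e^{K\eta_t^{(k)}}=\prod_{j=1}^t\prod_{\ell=1}^K M_0^{(\ell)}\!\bigl(K[\bW^{q(t-j+1)}]_{k\ell}\bigr),\quad M_0^{(\ell)}(c)\triangleq\E_0 e^{cs^{(\ell)}},
\end{align*}
the argument $K[\bW^{qn}]_{k\ell}$ equals $1+O\bigl(K\sigma_2^{qn}(\bW)\bigr)$ for large $n$ but can be as large as $K$ for small $n$, so the standard ``$e^{S_t}$ is a unit-mean martingale'' trick from the centralized SPRT does not transfer verbatim. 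Since $M_0^{(\ell)}(1)=1$ and $(\log M_0^{(\ell)})'(1)=\cD_1^{(\ell)}$, a Taylor expansion of $\log M_0^{(\ell)}$ around $c=1$ converts the ``large-$n$'' factors into geometric series $\sum_n\sigma_2^{qn}(\bW)$ and $\sum_n\sigma_2^{2qn}(\bW)$, both summing to a multiple of $\sigma_2^q(\bW)/(1-\sigma_2^q(\bW))$; the finitely many ``small-$n$'' factors contribute a harmless bounded factor precisely because Condition~\ref{con2} guarantees $M_0^{(\ell)}(c)$ stays finite on the worst-case range $[0,K]\subset[0,K\sqrt{K}]$. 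Packaging all of these pieces into the compact form $\tfrac{\sigma_2^q(\bW)}{1-\sigma_2^q(\bW)}O(1)$—so that the gap to the centralized CSPRT visibly shrinks as the number of per-slot message-passings $q$ grows—is the most delicate bookkeeping step.
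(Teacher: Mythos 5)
Your decomposition $\eta_t^{(k)}=\tfrac{1}{K}S_t^c+e_t^{(k)}$, with $\bJ=\tfrac{1}{K}{\bf 1}{\bf 1}^T$ as the spectral projector and $\sigma_2^{qn}(\bW)$ as the contraction rate, is exactly the paper's starting point, and your treatment of the expected stopping time via Wald's identity matches its Lemma \ref{lemma1}. One caution there: you bound $|\E_1 e_t^{(k)}|$ at a \emph{deterministic} $t$ by replacing $\bs_j$ with its mean, but then substitute $\E_1 e_{\T_\text{ca}^{(k)}}^{(k)}$ into the stopping-time identity. Since the coefficient $\lb\bW^{q(\T_\text{ca}^{(k)}-j+1)}-\bJ\rb_{k,\cdot}$ attached to $\bs_j$ depends on $\T_\text{ca}^{(k)}$, hence on future samples, you cannot condition on the past and pull out $\E_1\bs_j$; the deterministic-$t$ bound does not transfer to the stopped sum. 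The fix is the one the paper uses and you already have the ingredient for: the pathwise envelope $|e_t^{(k)}|\le\sum_{j=1}^t\sigma_2^{q(t-j+1)}\lVert\bs_j\rVert_2$, whose supremum over $t$ is dominated in distribution by the fixed series $\sum_{j=1}^\infty\sigma_2^{qj}\lVert\bs_j\rVert_2$ with expectation $\tfrac{\sigma_2^q}{1-\sigma_2^q}\E_1\lVert\bs_1\rVert_2=\tfrac{\sigma_2^q}{1-\sigma_2^q}O(1)$ under Condition \ref{con2}.

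The genuine gap is in the error-probability half. A uniform bound $\sup_t\E_0\exp\lb K\eta_t^{(k)}\rb\le C$ --- even granting your product-of-MGFs computation --- does not imply $\Prob_0\lb\eta_{\T_\text{ca}^{(k)}}^{(k)}\ge B\rb\le Ce^{-KB}$: $\exp\lb K\eta_t^{(k)}\rb$ is not a (super)martingale because of the time-varying weights, so optional stopping is unavailable, and applying Markov at each fixed $t$ and union-bounding over $t$ diverges. The paper's Lemma \ref{lemma2} circumvents this by applying Markov directly to the \emph{stopped} variable after the pathwise bound $\eta_{\T_\text{ca}^{(k)}}^{(k)}\le\tfrac{1}{K}\sum_{\ell}S^{(\ell)}_{\T_\text{ca}^{(k)}}+\phi_{\T_\text{ca}^{(k)}}$ with $\phi_t=\sum_{j=1}^t\lVert\bs_j\rVert_2\sigma_2^{q(t-j+1)}$: after multiplying by $K$ and exponentiating, the factor $\exp\lb\sum_\ell S^{(\ell)}_{\T_\text{ca}^{(k)}}\rb=\prod_{j\le\T_\text{ca}^{(k)}}\prod_\ell l_j^{(\ell)}$ is exactly the likelihood ratio, so Wald's likelihood-ratio identity converts $\E_0\left[e^{K\phi_{\T_\text{ca}^{(k)}}}\prod_j\prod_\ell l_j^{(\ell)}\right]$ into $\E_1\left[e^{K\phi_{\T_\text{ca}^{(k)}}}\right]$, which is then bounded by $\E_1\left[e^{K\sup_t\phi_t}\right]\le\prod_{j=1}^\infty\lb\E_1 e^{K\lVert\bs_j\rVert_2}\rb^{\sigma_2^{qj}}\le M^{\sigma_2^q/(1-\sigma_2^q)}$ using independence, Jensen's inequality, and Condition \ref{con2}. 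This change-of-measure step at the stopping time is the missing idea; without it (or some substitute maximal inequality) your Chernoff bound never reaches the random time $\T_\text{ca}^{(k)}$, and the threshold calibration $KB=-\log\alpha+O(1)$ on which your final substitution rests is unsupported.
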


Theorem \ref{thm1} can be readily proved by invoking the following two key lemmas.
\begin{lemma}\label{lemma1}
All sensors achieve the same expected stopping time in the asymptotic regime:
\begin{align}\label{Lemma1_eq}
\E_1\lb \T_\text{ca}^{(k)}\rb=\frac{B}{\sum_{k=1}^K\cD_1^{(k)}/K}+O(1), \quad \E_0\lb \T_\text{ca}^{(k)}\rb=\frac{A}{\sum_{k=1}^K\cD_0^{(k)}/K}+O(1), 
\end{align}
for $k=1, 2, \ldots, K$, as $A, B\to \infty$.
\end{lemma}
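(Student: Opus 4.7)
The plan is to decompose the local decision statistic $\eta_t^{(k)}$ into a global consensus component plus a geometrically decaying residual, apply a Wald-type identity to the former, and bound the latter by an $O(1)$ correction. By Condition~\ref{con1}, the matrix $\bW^q$ admits the splitting $\bW^q = \bP + \boldsymbol{R}$ with $\bP \triangleq \tfrac{1}{K}\mathbf{1}\mathbf{1}^\T$ and $\boldsymbol{R} \triangleq \bW^q - \bP$. The relations $\bW^q\mathbf{1}=\mathbf{1}$ and $\mathbf{1}^\T\bW^q=\mathbf{1}^\T$ imply $\bP\boldsymbol{R}=\boldsymbol{R}\bP=\mathbf{0}$, whence $(\bW^q)^n = \bP + \boldsymbol{R}^n$ and $\|\boldsymbol{R}^n\|_2 \le \sigma_2^{qn}(\bW) = \rho^n$ with $\rho \triangleq \sigma_2^q(\bW) \in (0,1)$. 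Substituting into \eqref{G_stats} yields $\eta_t^{(k)} = \bar S_t + \xi_t^{(k)}$, where
\[
\bar S_t \triangleq \tfrac{1}{K}\sum_{j=1}^t\sum_{\ell=1}^K s_j^{(\ell)}, \qquad \xi_t^{(k)} \triangleq \Bigl[\sum_{j=1}^t \boldsymbol{R}^{t-j+1}\bs_j\Bigr]_k,
\]
so that $\bar S_t$ is a centralized random walk with i.i.d.\ increments of mean $\bar\cD_1 \triangleq \tfrac{1}{K}\sum_k \cD_1^{(k)}$ under $\cH_1$ (and $-\bar\cD_0$ under $\cH_0$), while the vector of residuals obeys the stable recursion $\boldsymbol{\xi}_t = \boldsymbol{R}(\boldsymbol{\xi}_{t-1}+\bs_t)$ and captures the deviation from consensus across the network.

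Next, I would repeat the Tower-property/Wald argument used in the proof of Theorem~\ref{thm0}, but applied to $\bar S_t$ in place of the centralized SPRT statistic. Since $\T_\text{ca}^{(k)}$ is adapted to $\{\bs_j\}$, this gives $\E_1[\bar S_{\T_\text{ca}^{(k)}}] = \E_1[\T_\text{ca}^{(k)}]\,\bar\cD_1$, and consequently
\[
\E_1[\T_\text{ca}^{(k)}] = \frac{\E_1\bigl[\eta_{\T_\text{ca}^{(k)}}^{(k)}\bigr] - \E_1\bigl[\xi_{\T_\text{ca}^{(k)}}^{(k)}\bigr]}{\bar\cD_1}.
\]
Because $\eta_\T^{(k)}$ exits $(-A,B)$ at stopping, $\E_1[\eta_{\T_\text{ca}^{(k)}}^{(k)}]$ equals $B$ plus an overshoot that is $O(1)$: indeed, the one-step increment $\eta_t^{(k)} - \eta_{t-1}^{(k)} = [\bW^q\bs_t + (\bW^q-\bI)\bEta_{t-1}]_k$ has uniformly bounded expectation thanks to Condition~\ref{con2} and the boundedness of $\|\bW^q\|$. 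It therefore suffices to show that $|\E_i[\xi_{\T_\text{ca}^{(k)}}^{(k)}]| = O(1)$ as $A,B\to\infty$; the $\cH_0$ case is entirely symmetric.

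The main obstacle is this last bound on $\E_i[\xi_{\T_\text{ca}^{(k)}}^{(k)}]$, because $\xi_t^{(k)}$ is \emph{not} a martingale and $\T_\text{ca}^{(k)}$ depends on the entire sample path, so optional stopping does not apply directly. My plan is to iterate the recursion $\boldsymbol{\xi}_t = \boldsymbol{R}(\boldsymbol{\xi}_{t-1}+\bs_t)$ together with $\|\boldsymbol{R}\|_2\le\rho$ to obtain the pathwise bound $\|\boldsymbol{\xi}_t\|\le\sum_{j=1}^t\rho^{t-j+1}\|\bs_j\|$; coupled with Condition~\ref{con2}'s exponential-moment control on $\|\bs_j\|$, this yields a uniform exponential-moment bound $\sup_t \E_i\bigl[e^{c\|\boldsymbol{\xi}_t\|}\bigr]<\infty$ for some $c>0$, and in particular uniform $L^p$ bounds for every $p\ge 1$. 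To transfer this control onto the random time $\T_\text{ca}^{(k)}$, I would combine it with the exponential concentration of $\T_\text{ca}^{(k)}$ around $\E_i[\T_\text{ca}^{(k)}]$---which follows from the positive drift of $\bar S_t$ and the Chernoff bound enabled by Condition~\ref{con2}---and apply H\"older's inequality to $\E_i[|\xi_{\T_\text{ca}^{(k)}}^{(k)}|] = \sum_t\E_i\bigl[|\xi_t^{(k)}|\mathbbm{1}_{\{\T_\text{ca}^{(k)}=t\}}\bigr]$, producing a constant bound independent of $A,B$ and completing the proof of \eqref{Lemma1_eq}.
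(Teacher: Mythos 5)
Your setup is exactly the paper's: the splitting $\bW^{q(t-j+1)}=\bJ+(\bW-\bJ)^{q(t-j+1)}$ (your $\bP+\boldsymbol{R}^{t-j+1}$), the Tower-property/Wald identity $\E_1\bigl[\bar S_{\T_\text{ca}^{(k)}}\bigr]=\E_1\bigl[\T_\text{ca}^{(k)}\bigr]\bar\cD_1$, and the overshoot argument $\E_1\bigl[\eta^{(k)}_{\T_\text{ca}^{(k)}}\bigr]=B+O(1)$ all appear verbatim in the paper's proof. The divergence --- and the gap --- is in how you close the one step that actually carries the theorem, namely $\bigl|\E_i\bigl[\xi^{(k)}_{\T_\text{ca}^{(k)}}\bigr]\bigr|=O(1)$. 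Your proposed route (uniform exponential moments of $\xi_t$, exponential concentration of $\T_\text{ca}^{(k)}$, then H\"older on $\sum_t\E_i\bigl[|\xi^{(k)}_t|\mathbbm{1}_{\{\T_\text{ca}^{(k)}=t\}}\bigr]$) does not deliver a bound independent of $A,B$. H\"older gives $\sum_t\|\xi^{(k)}_t\|_p\,\Prob(\T_\text{ca}^{(k)}=t)^{1-1/p}\le C_p\sum_t\Prob(\T_\text{ca}^{(k)}=t)^{1-1/p}$, and the last sum diverges as $B\to\infty$: by the renewal CLT the exit time spreads over an interval of width $\Theta(\sqrt{B})$ on which $\Prob(\T_\text{ca}^{(k)}=t)\asymp B^{-1/2}$, so the sum is at least of order $B^{1/(2p)}$ for every finite $p$; concentration only kills the tails, not the bulk. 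The fallback of bounding $\E_i\bigl[|\xi^{(k)}_{\T}|\bigr]$ by $\E_i\bigl[\max_{t\le N}|\xi^{(k)}_t|\bigr]$ with uniform exponential moments is no better --- it yields $O(\log N)=O(\log B)$. Uniform-in-$t$ moment control of $\xi_t$ is simply not the right currency here.

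What rescues the step is the structure you already wrote down but then abandoned: the pathwise domination $|\xi^{(k)}_t|\le\sum_{j=1}^{t}\rho^{\,t-j+1}\|\bs_j\|_2$ shows that $\xi^{(k)}_{\T}$ is governed by the \emph{last few} increments before $\T$, with geometrically decaying weights. The paper bounds $|\xi^{(k)}_{\T}|\le\sup_t\sum_{j=1}^{t}\rho^{\,t-j+1}\|\bs_j\|_2$ and then uses the i.i.d.\ structure of $\{\bs_j\}$ to identify (in distribution) each partial sum with $\sum_{j=1}^{t}\rho^{\,j}\|\bs_j\|_2$, whose monotone limit has expectation $\frac{\rho}{1-\rho}\E_i\|\bs_1\|_2=O(1)$ by Condition \ref{con2}; no concentration of $\T_\text{ca}^{(k)}$ and no H\"older are needed. (That reindexing under the supremum is itself slightly informal, but it is the intended mechanism.) To repair your write-up, replace the H\"older/concentration paragraph with this supremum-plus-reindexing bound, or with any argument that exploits the geometric decay in $\T-j$ rather than uniform-in-$t$ moments.
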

\proof
For notational convenience, we omit the subscript of $\T_\text{ca}^{(k)}$ and use $\T^{(k)}$ for the stopping time of the CA-DSPRT throughout the proof. 

We first define $\bJ\triangleq \frac{1}{K}{\bf 1}{\bf 1}^T$, where ${\bf 1}$ is an all-one  vector. Note that the following equality will become useful in our proof later:
\begin{align}\label{Wt}
\bW^{t}-\bJ=\lb\bW-\bJ\rb^t, \quad \text{for}\;\; t=1, 2, \ldots,
\end{align}
which can be shown by induction as follows: 1) For $t=1$, \eqref{Wt} obviously holds true; 2) assume $\bW^n-\bJ=\lb\bW-\bJ\rb^n$, then 
\begin{align}
\lb\bW-\bJ\rb^{n+1}&=\lb\bW-\bJ\rb^{n}\lb\bW-\bJ\rb\nonumber\\&=\lb\bW^n-\bJ\rb\lb\bW-\bJ\rb\nonumber\\&=\bW^{n+1}-\bJ\bW-\bW^n\bJ+\bJ^2\nonumber\\&=\bW^{n+1}-\bJ,
\end{align}
where the last equality holds true because Condition \ref{con1} implies that $\bJ\bW=\frac{1}{K}{\bf 1}{\bf 1}^T\bW=\frac{1}{K}{\bf 1}{\bf 1}^T=\bJ$,  and furthermore $$\bW^n\bJ=\bW^{n-1}\lb\frac{1}{K}\bW{\bf 1}{\bf 1}^T\rb=\bW^{n-1}\bJ=\cdots =\bJ,$$ and $\bJ\bJ=\frac{1}{K^2}{\bf 1}{\bf 1}^T{\bf 1}{\bf 1}^T=\bJ$ follows by definition.

Another useful inequality holds for any matrix $\boldsymbol\Theta\in \mathbb{R}^{L\times L}$ and $\bx\in \mathbb{R}^L$ \cite{Matrix_book}
\begin{align}\label{bound_eig}
\frac{\lVert\boldsymbol\Theta\bx\rVert_2}{\lVert\bx\rVert_2}\le \sup_{\bx\in \mathbb{R}^L}\frac{\lVert\boldsymbol\Theta\bx\rVert_2}{\lVert\bx\rVert_2}=\sigma_1\lb\boldsymbol\Theta\rb,\end{align}
where $\lVert\cdot\rVert_2$ is the $L_2$-norm, and $\sigma_1(\cdot)$ is the largest singular value of a given matrix.
Moreover, Condition \ref{con1} implies that $\bW$ has the maximum singular value $\sigma_1(\bW)=1$, and $\bW=\frac{1}{K}{\bf 1}{\bf 1}^T+\sum_{i=2}^K\sigma_i(\bW)\bu_i\bv_i^T$, where $\bu_i$ and $\bv_i$ are singular vectors associated with $\sigma_i\lb\bW\rb$, leading to
\begin{align}
\sigma_1 \lb\bW-\bJ\rb=\sigma_2(\bW).
\end{align} 
For notational simplicity, $\sigma_2$ will represent $\sigma_2(\bW)$ henceforth unless otherwise stated. Substituting $\boldsymbol\Theta=\bW-\bJ$ into \eqref{bound_eig},  we have the following bounds for any random vector $\bs_j$ (that consists of LLRs at time $j$):
\begin{align}\label{boundstats}
\lVert \lb\bW-\bJ\rb^{q(t-j+1)} \bs_j\rVert_2&= \lVert \lb\bW-\bJ\rb \lb\bW-\bJ\rb^{q(t-j+1)-1}\bs_j\rVert_2 \nonumber\\&\le\sigma_2 \lVert \lb\bW-\bJ\rb^{q(t-j+1)-1}\bs_j\rVert_2\nonumber\\&\le \sigma_2^2 \lVert \lb\bW-\bJ\rb^{q(t-j+1)-2}\bs_j\rVert_2\nonumber\\&\cdots\nonumber\\&\le\sigma_2^{q(t-j+1)}\lVert\bs_j\rVert_2.
\end{align}
Denoting $\be_k\triangleq[0, \ldots, \underbrace{1}_{k\text{th element}}, \ldots, 0]^T$ and invoking \eqref{Wt} and \eqref{boundstats} give the following inequalities 
\begin{align}\label{boundproof_eq1}
\left|\be_k^T\lb\bW^{q(t-j+1)}-\bJ\rb\bs_j\right|\le \lVert \lb\bW^{q(t-j+1)}-\bJ\rb \bs_j\rVert_2\le \sigma_2^{q(t-j+1)}\lVert \bs_j\rVert_2 ,\;\;\text{a.s.}.
\end{align}
Then expanding the leftmost term in \eqref{boundproof_eq1} gives
\begin{align}\label{bound_intermidiate}
-\sigma_2^{q(t-j+1)}\lVert\bs_j\rVert_2 +\be_k^T \bJ\bs_j\le\be_k^T\bW^{q(t-j+1)}\bs_j\le \sigma_2^{q(t-j+1)}\lVert \bs_j\rVert_2+\be_k^T \bJ\bs_j,\;\; \text{a.s.}.
\end{align}
Summing \eqref{bound_intermidiate} from $j=1$ to $j=t$, and using \eqref{G_stats}, we have
\begin{align}\label{bound_local}
-\sum_{j=1}^t\lVert\bs_j\rVert_2 \sigma_2^{q(t-j+1)}+\be_k^T \bJ\sum_{j=1}^t\bs_j\le\be_k^T&\underbrace{\sum_{j=1}^t\bW^{q(t-j+1)}\bs_j}_{\boldsymbol\eta_t}\nonumber\\&\le\sum_{j=1}^t\lVert \bs_j\rVert_2 \sigma_2^{q(t-j+1)}+\be_k^T \bJ\sum_{j=1}^t\bs_j,\;\;\text{a.s.}
\end{align}
for any $t=1, 2, \ldots$. 
Taking expectations on both inequalities of \eqref{bound_local}, we arrive at 
\begin{align}\label{bound_exp}
-\E_i\lb\sum_{j=1}^{\T^{(k)}}\lVert\bs_j\rVert_2 \sigma_2^{q(\T^{(k)}-j+1)}\rb&+\be_k^T \bJ\E_i\lb\sum_{j=1}^{\T^{(k)}}\bs_j\rb\le\E_i\lb\be_k^T\bEta_{\T^{(k)}}\rb=\E_i\lb\eta^{(k)}_{\T^{(k)}}\rb\nonumber\\&\le\E_i\lb\sum_{j=1}^{\T^{(k)}}\lVert \bs_j\rVert_2 \sigma_2^{q(\T^{(k)}-j+1)}\rb+\E_i\lb\be_k^T \bJ\sum_{j=1}^{\T^{(k)}}\bs_j\rb\!, \;\; i=0, 1.
\end{align}
Let us look at the first inequality in \eqref{bound_exp} first. We have
\begin{align}\label{BoundJS}
\be_k^T \bJ\E_i\lb\sum_{j=1}^{\T^{(k)}}\bs_j\rb&\le\E_i\lb\eta^{(k)}_{\T^{(k)}}\rb+\E_i\lb\sum_{j=1}^{\T^{(k)}}\lVert\bs_j\rVert_2 \sigma_2^{q(\T^{(k)}-j+1)}\rb,
\end{align}
where the second term on the right-hand side can be further bounded above by
\begin{align}
\E_i\lb\sum_{j=1}^{\T^{(k)}}\lVert\bs_j\rVert_2 \sigma_2^{q(\T^{(\ell)}-j+1)}\rb&\le \E_i\lb\sup_{t}\sum_{j=1}^t\lVert\bs_j\rVert_2\sigma_2^{q(t-j+1)}\rb\nonumber\\&=\E_i\lb\sup_{t}\sum_{j=1}^t\lVert\bs_j\rVert_2\sigma_2^{qj}\rb\label{BoundConst1}\\&=\E_i\lb\sum_{j=1}^\infty\lVert\bs_j\rVert_2\sigma_2^{qj}\rb=\frac{\sigma_2^q}{1-\sigma_2^q}\,\E_i\lVert\bs_j\rVert_2,\label{BoundConst}
\end{align}
where \eqref{BoundConst1} holds since $\bs_j$ are independent and identically distributed  for all $j$.

Meanwhile, the left-hand side of \eqref{BoundJS} for $i=1$ (i.e., under ${\cal H}_1$) can be expressed as
\begin{align}
\be_k^T \bJ\,\E_1\lb\sum_{j=1}^{\T^{(k)}}\bs_j\rb&=\be_k^T \frac{1}{K}{\bf 1}{\bf 1}^T\,\E_1\lb\sum_{j=1}^{\infty}\mathbbm{1}_{\{j\le \T^{(k)}\}}\bs_j\rb\nonumber\\&=\be_k^T \frac{1}{K}{\bf 1}{\bf 1}^T\,\E_1\lb\sum_{j=1}^{\infty}\mathbbm{1}_{\{j\le \T^{(k)}\}}\E_1\lb\left.\bs_j\right|\bs_{j-1}, \bs_{j-2}, \ldots, \bs_1\rb\rb\label{towerProp}\\&=\be_k^T \frac{1}{K}{\bf 1}{\bf 1}^T\,\E_1\lb\sum_{j=1}^{\infty}\mathbbm{1}_{\{j\le \T^{(k)}\}}\underbrace{\E_1\lb\bs_j\rb}_{[\cD_1^{(1)}, \cD_1^{(2)}, \ldots, \cD_1^{(K)}]^T}\rb\nonumber\\&=\be_k^T \frac{1}{K}{\bf 1}\underbrace{{\bf 1}^T\,{[\cD_1^{(1)}, \cD_1^{(2)}, \ldots, \cD_1^{(K)}]^T}}_{\sum_{k=1}^K\cD_1^{(k)}}\,\E_1\lb\T^{(k)}\rb\nonumber\\&=\E_1\lb\T^{(k)}\rb\sum_{k=1}^K\cD_1^{(k)}/K,\label{vWald}
\end{align}
where \eqref{towerProp} is obtained by the Tower's property and the fact that $\{\T^{(k)}\ge j\}$ (or its complementary event $\{\T^{(k)}\le  j-1 \}$) is fully determined by  $\bs_{j-1}, \bs_{j-2}, \ldots, \bs_1$. 

Combining \eqref{BoundJS}, \eqref{BoundConst}, and \eqref{vWald} for $i=1$ gives
\begin{align}\label{Bound_Ineq}
\E_1\lb\T^{(k)}\rb\le \frac{\E_1\lb\eta^{(k)}_{\T^{(k)}}\rb}{\sum_{k=1}^K{\cD}^{(k)}_1/K}+\frac{\sigma_2^q}{1-\sigma_2^q}\frac{\E_1\lb\lVert\bs_j\rVert_2\rb}{\sum_{k=1}^K{\cD}^{(k)}_1/K}.
\end{align}
Note that, under ${\cal H}_1$, $\eta^{(k)}_{\T^{(k)}}$ either hits the upper threshold with probability $1-\beta$ or the lower threshold with probability $\beta$, i.e., 
\begin{align}
\E_1\lb\eta^{(k)}_{\T^{(k)}}\rb&=\beta \lb -A-\varsigma_0\rb+ (1-\beta) (B+\varsigma_1)\nonumber\\&\to B+O(1),\quad A,B\to\infty,
\end{align}
with the constant expected overshoots $\varsigma_i$'s (i.e., independent of $A, B$) that can be evaluated by renewal theory \cite{SeqA_book,Basseville}. 

Moreover, by noting that $\sqrt{K}|s_j^{(k)}|<1+K\sqrt{K}|s_j^{(k)}|\le e^{K\sqrt{K}|s_j^{(k)}|}$ (since $1+x\le e^x$), then Condition \ref{con2} indicates that
\begin{align}
\sqrt{K}\,\E_i\lb |s_j^{(k)}|\rb< \E_i\lb e^{K\sqrt{K}|s_j^{(k)}|}\rb\le C, \quad k=1, 2, \ldots, K, 
\end{align}
which, together with the relation between the $L_2$ and $L_\infty$ norms, further implies
\begin{align}
\E_i\lb\lVert\bs_j\rVert_2\rb\le \sqrt{K}\,\E_i\lb\lVert\bs_j\rVert_\infty\rb \triangleq\sqrt{K} \max_k \E_i\lb |s_j^{(k)}|\rb< C.
\end{align}
As a result, Condition \ref{con2} provides the sufficient condition such that $\E_i\lb\lVert\bs_j\rVert_2\rb$ is bounded above by some constant, and hence $\E_i\lb\lVert\bs_j\rVert_2\rb= O(1)$. 

Therefore, the following inequality follows from \eqref{Bound_Ineq}:
\begin{align}\label{UpperBound}
\E_1\lb\T^{(k)}\rb\le \frac{B}{\sum_{k=1}^K{\cD}^{(k)}_1/K}+\frac{\sigma_2^{q}}{1-\sigma_2^q}O(1), \qquad A, B\to \infty.
\end{align}
Similarly, from the second inequality in \eqref{bound_exp}, we can establish 
\begin{align}
\E_1\lb\T^{(k)}\rb\ge \frac{B}{\sum_{k=1}^K{\cD}^{(k)}_1/K}-\frac{\sigma_2^{q}}{1-\sigma_2^q}O(1), \qquad A, B\to \infty, 
\end{align}
which, together with \eqref{UpperBound}, proves the asymptotic characterization for $\E_1\lb\T^{(k)}\rb$ given by \eqref{Lemma1_eq}.

By treading on the similar derivations as above, $\E_0\lb\T^{(k)}\rb$ can be bounded by
\begin{align}
\frac{A}{\sum_{k=1}^K{\cD}^{(k)}_1/K}-\frac{\sigma_2^{q}}{1-\sigma_2^q}O(1)\le 
\E_0\lb\T^{(k)}\rb\le \frac{A}{\sum_{k=1}^K{\cD}^{(k)}_0/K}+\frac{\sigma_2^{q}}{1-\sigma_2^q}O(1), \quad A, B\to \infty,
\end{align}
which completes the proof.
\endproof
Lemma \ref{lemma1} characterizes how the expected sample sizes of the CA-DSPRT vary as the decision thresholds go to infinity. 
The next lemma relates the error probabilities of the CA-DSPRT in the same asymptotic regime to the decision thresholds. 
\begin{lemma}\label{lemma2}
The error probabilities of CA-DSPRT in the asymptotic regime as $A, B\to\infty$ at each sensor is bounded above by
\begin{align}\label{lemma2_eq}
\log \Prob_0\lb D^{(k)}_\text{ca}=1\rb\le  -KB+O(1), \quad \log \Prob_1\lb D^{(k)}_\text{ca}=0\rb\le -KA+O(1).
\end{align}
\end{lemma}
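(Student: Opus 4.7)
The plan is to prove the type-I bound; the type-II bound will follow by the symmetric argument (exchange $\cH_0$ and $\cH_1$, $A$ and $B$, and the sign of the statistic). First, since $\exp(-S_t) = \prod_{j=1}^t \exp(-\mathbf{1}^T \bs_j)$ is a positive martingale under $\cH_1$ with mean one (each factor satisfies $\E_1[\exp(-s_j^{(\ell)})] = 1$), Wald's likelihood-ratio identity, obtained by optional stopping combined with Fatou's lemma, gives
\[
\Prob_0\lb D^{(k)}_\text{ca} = 1 \rb = \E_1\!\left[\exp(-S_{\T_\text{ca}^{(k)}})\,\mathbbm{1}\{\eta^{(k)}_{\T_\text{ca}^{(k)}} \ge B\}\right].
\]

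Next I would combine this identity with a near-consensus inequality that recycles the machinery of the proof of Lemma~\ref{lemma1}. Because $\mathbf{1}^T \bW = \mathbf{1}^T$ implies $\mathbf{1}^T \bEta_t = S_t$, and $K \be_k^T \bJ = \mathbf{1}^T$, summing \eqref{bound_intermidiate} from $j=1$ to $t$ yields
\[
|K \eta_t^{(k)} - S_t| \le K Z_t, \qquad Z_t := \sum_{j=1}^t \sigma_2^{q(t-j+1)} \lVert \bs_j \rVert_2.
\]
On $\{\eta^{(k)}_{\T_\text{ca}^{(k)}} \ge B\}$ this gives $-S_{\T_\text{ca}^{(k)}} \le -KB + K Z_{\T_\text{ca}^{(k)}}$, and substituting into the likelihood-ratio identity produces
\[
\Prob_0\lb D^{(k)}_\text{ca} = 1 \rb \le e^{-KB}\, \E_1\!\left[\exp\lb K Z_{\T_\text{ca}^{(k)}}\rb\right],
\]
so the claim reduces to showing $\E_1[\exp(K Z_{\T_\text{ca}^{(k)}})] = O(1)$.

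For the exponential-moment control, I would invoke Condition~\ref{con2}. Since $\lVert\bs\rVert_2 \le \sqrt{K}\max_\ell |s^{(\ell)}|$, Condition~\ref{con2} ensures $M := \E_1[\exp(K\lVert\bs\rVert_2)] < \infty$. The convexity inequality $\exp(\epsilon y) \le 1 - \epsilon + \epsilon\exp(y)$ for $\epsilon \in [0,1]$, $y \ge 0$, applied with $\epsilon = \sigma_2^{qi}$, gives $\E_1[\exp(K\sigma_2^{qi}\lVert\bs\rVert_2)] \le 1 + \sigma_2^{qi}(M-1)$. Since $\{\bs_j\}$ are independent across $j$, factorizing $\exp(K Z_t)$ into a product over $j$ yields the uniform-in-$t$ bound
\[
\E_1[\exp(K Z_t)] = \prod_{i=1}^t \E_1[\exp(K\sigma_2^{qi}\lVert\bs\rVert_2)] \le \prod_{i=1}^\infty\!\lb 1 + \sigma_2^{qi}(M-1)\rb \le \exp\!\lb\tfrac{(M-1)\sigma_2^q}{1-\sigma_2^q}\rb.
\]

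The main obstacle is then transferring this uniform $t$-wise bound to the random stopping time $\T_\text{ca}^{(k)}$, since in general $\E_1[\exp(K Z_\T)]$ is not controlled by $\sup_t \E_1[\exp(K Z_t)]$. My plan is to exploit the contractive recursion $Z_{t+1} = \sigma_2^q(Z_t + \lVert\bs_{t+1}\rVert_2)$: a direct computation shows $\E_1[\exp(K Z_{t+1}) \mid \cF_t] = \gamma \exp(K\sigma_2^q Z_t) \le \gamma \exp(K Z_t)$ with $\gamma := \E_1[\exp(K\sigma_2^q\lVert\bs\rVert_2)]$, so that $\exp(K Z_t)/\gamma^t$ is a $\cH_1$-supermartingale of initial value one. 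Optional stopping then gives $\E_1[\exp(K Z_{\T_\text{ca}^{(k)}})/\gamma^{\T_\text{ca}^{(k)}}] \le 1$, and I would remove the $\gamma^{\T_\text{ca}^{(k)}}$ factor through a truncation argument that combines the finite $\E_1 \T_\text{ca}^{(k)}$ from Lemma~\ref{lemma1} with the small mismatch $\gamma - 1 = O(\sigma_2^q)$ ensured by Condition~\ref{con2}. Assembling all pieces yields $\log \Prob_0(D^{(k)}_\text{ca} = 1) \le -KB + O(1)$, and the type-II bound follows identically with $\exp(S_t)$ as the $\cH_0$-martingale in place of $\exp(-S_t)$.
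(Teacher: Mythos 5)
Your argument tracks the paper's proof almost exactly through the key reduction: the near-consensus sandwich $|K\eta^{(k)}_t-S_t|\le KZ_t$ is precisely the paper's \eqref{bound_local}, and your Wald likelihood-ratio identity is an equivalent (slightly cleaner) packaging of the paper's Markov-inequality-plus-change-of-measure step in \eqref{const_proof}--\eqref{const_proof2}. Both routes land on the same remaining task: show $\E_1\lb e^{KZ_{\T_\text{ca}^{(k)}}}\rb=O(1)$, where $Z_t$ is the paper's $\phi_t$. You are also right that the fixed-$t$ bound $\sup_t\E_1\lb e^{KZ_t}\rb<\infty$ does not by itself control the expectation at the stopping time, and your identification of that obstacle is more honest than the paper's own treatment.

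The gap is in your proposed fix. Optional stopping applied to the supermartingale $e^{KZ_t}/\gamma^t$ gives $\E_1\lb e^{KZ_{\T}}\gamma^{-\T}\rb\le 1$ with $\gamma=\E_1\lb e^{K\sigma_2^q\lVert\bs\rVert_2}\rb>1$, and no truncation can convert this into $\E_1\lb e^{KZ_\T}\rb=O(1)$: any attempt to reinstate the discarded factor costs something of the order of $\E_1\lb\gamma^{\T}\rb\ge\gamma^{\E_1\T}$, and by Lemma~\ref{lemma1} we have $\E_1\T\sim B/\bigl(\sum_k\cD_1^{(k)}/K\bigr)\to\infty$, so this quantity diverges as $B\to\infty$ no matter how close $\gamma$ is to $1$ (it is $1+O(\sigma_2^q)$ for \emph{fixed} network parameters, not vanishing as $B$ grows). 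Splitting on $\{\T\le n\}$ versus $\{\T>n\}$, or H\"older against $\gamma^{\T}$, runs into the same wall. The inequality $\E_1\lb XY^{-1}\rb\le 1$ with $Y\ge 1$ simply carries no information about $\E_1\lb X\rb$. The paper closes this step differently: it bounds $e^{K\phi_\T}\le e^{K\sup_t\phi_t}$ \emph{pathwise}, then evaluates the exponential moment of the supremum by factorizing over the independent $\bs_j$ and applying Jensen's inequality to $x\mapsto x^{\sigma_2^{qj}}$ (concave since $\sigma_2^{qj}<1$), giving $\prod_j\bigl(\E_1 e^{K\lVert\bs_j\rVert_2}\bigr)^{\sigma_2^{qj}}\le M^{\sigma_2^q/(1-\sigma_2^q)}$, which never touches the stopping time. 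If you want to stay with your supermartingale formulation, you would need to replace the $\gamma^{-\T}$ normalization by an argument that decouples $Z_\T$ from $\T$ itself --- for instance a pathwise domination of $Z_\T$ by a quantity whose exponential moment is finite independently of the stopping rule --- which is in effect what the paper's $\sup_t$ bound accomplishes.
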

\begin{proof}
Again, the proof makes use of the inequality \eqref{bound_local} to bound the local statistic. In the following, we show the proof for the Type-I error probability, while that for the Type-II error  probability follows similarly. 

First, due to \eqref{bound_local}, note the following relation $$\left\{D_\text{ca}^{(k)}=1\right\}\triangleq \left\{\be_k^T\sum_{j=1}^{\T_\text{ca}^{(k)}}\bW^{q\lb\T_\text{ca}^{(k)}-j+1\rb}\bs_j\ge B\right\}\subset \left\{\sum_{j=1}^{\T_\text{ca}^{(k)}}\lVert \bs_j\rVert_2 \sigma_2^{q\lb\T_\text{ca}^{(k)}-j+1\rb}+\be_k^T \bJ\sum_{j=1}^{\T_\text{ca}^{(k)}}\bs_j\ge B\right\}.$$ 
Therefore,
\begin{align}\label{const_proof}
\Prob_0\lb D^{(k)}_\text{ca}=1\rb&\le \Prob_0\lb \underbrace{\sum_{j=1}^{\T_\text{ca}^{(k)}}\lVert \bs_j\rVert_2 \sigma_2^{q\lb\T_\text{ca}^{(k)}-j+1\rb}}_{\phi_{\T_\text{ca}^{(k)}}}+\be_k^T \bJ\sum_{j=1}^{\T_\text{ca}^{(k)}}\bs_j\ge B\rb\nonumber\\&= \Prob_0\lb \exp\left[K\lb{\phi_{\T_\text{ca}^{(k)}}}+\be_k^T \bJ\sum_{j=1}^{\T_\text{ca}^{(k)}}\bs_j\rb\right]\ge e^{KB}\rb\nonumber\\&\le e^{-KB} \;\underbrace{\E_0\lb \exp\left[K\lb \phi_{\T_\text{ca}^{(k)}}+\be_k^T \bJ\sum_{j=1}^{\T_\text{ca}^{(k)}}\bs_j\rb\right]\rb}_{{\cal B}_k}
\end{align}
where the second inequality follows from the Markov inequality. 

In order to show the results in \eqref{lemma2_eq},  the remaining task is to bound the coefficient term
\begin{align}\label{const_proof2}
{\cal B}_k&=  \;\E_0\lb e^{K\phi_{\T_\text{ca}^{(k)}}}\exp\left[\sum_{\ell=1}^K \sum_{j=1}^{\T_\text{ca}^{(k)}}s^{(\ell)}_j\right]\rb 
\nonumber\\&=\;\E_0\lb e^{K\phi_{\T_\text{ca}^{(k)}}} \prod_{j=1}^{\T_\text{ca}^{(k)}}\prod_{\ell=1}^Kl_j^{(\ell)}\rb \nonumber\\&=\;\E_1\lb e^{K\phi_{\T_\text{ca}^{(k)}}} \rb,
\end{align}
where the last equality is obtained by changing the probability measure of the expectation from ${\cal H}_0$ to ${\cal H}_1$. 
To that end, the following inequalities are useful
\begin{align}\label{const_bound}
\;\E_1\lb e^{K\phi_{\T_\text{ca}^{(k)}}} \rb\le \;\E_1\lb e^{K \sup_t \phi_t} \rb= \;\prod_{j=1}^\infty\E_1\lb  e^{K \lVert\bs_j\rVert_2\sigma_2^{qj}} \rb&\le \;\prod_{j=1}^\infty\lb\E_1\lb  e^{K \lVert\bs_j\rVert_2} \rb\rb^{\sigma_2^{qj}},
\end{align}
where the second inequality follows from the Jenson's inequality since $x^a$ is a concave function for $a<1$. Thanks to \eqref{const_bound}, Condition \ref{con2} (i.e., there exists a finite number $M$ such that $\E_i\lb  e^{K \sqrt{K}\lVert\bs_j\rVert_\infty} \rb\le M$) is sufficient to ensure that ${\cal B}_k$ in \eqref{const_proof} is upper bounded by a constant term (i.e., independent of $A, B$) due to the following: 
\begin{align}
{\cal B}_k\le \prod_{j=1}^\infty\lb\E_1\lb  e^{K \lVert\bs_j\rVert_2} \rb\rb^{\sigma_2^{qj}}&\le   \prod_{j=1}^\infty\lb\underbrace{\E_1\lb  e^{K \sqrt{K}\lVert\bs_j\rVert_\infty} \rb}_{\le M}\rb^{\sigma_2^{qj}}\nonumber\\&\le \exp\lb\sum_{j=1}^\infty\sigma_2^{qj}\log M\rb=M^{\frac{\sigma_2^q}{1-\sigma_2^q}} = O(1).
\end{align}
As a result, \eqref{const_proof} implies that
\begin{align}
\log \Prob_0\lb D^{(k)}_\text{ca}\ge B\rb\le -KB+\frac{\sigma_2^q}{1-\sigma_2^q}O(1),
\end{align}
proving the asymptotic characterization of the Type-I error probability given by \eqref{lemma2_eq}. 
\end{proof}

\ignore{\subsection{Large Number of Sensors}
In practice, the network size can be large, especially in the case where the target signal is weak, and a large number of sensors are desired in the hope of an acceptable stopping time. This scenario encourages the asymptotic analysis of CA-DSPRT in the regime where $K\to\infty$. Accordingly, it is necessary to  assume the total K-L divergence in the network amounts to a constant as $K$ grows; otherwise, sequential testing becomes not meaningful, since the fixed-sample test with one sampling step would lead to vanishing error probabilities as $K\to \infty$. Therefore, we denote 
\begin{align}
&{\cD}_i^{(k)}= \widetilde \cD_i^{(k)}/K+o\lb\frac{1}{K}\rb, \nonumber\\\text{and}
\;\;&{\sum_{k=1}^K\cD_i^{(k)}}\to\overline{\cD}_i \triangleq\sum_{k=1}^K{\widetilde  \cD}_i^{(k)}/K\sim O(1), \quad \text{as}\;\; K\to \infty,\;\; i=0, 1.
\end{align}
Note that $\overline{D}_i$ is a constant term in terms of growing $K$, indicating that the summing KL divergence of the entire network does not scale with the network size.
This corresponds to the case where the practitioner set a desired performance, and then decide the network size depending on the strength of the target signal. And as the signal strength becomes small and the network size grows large, what performance tradeoff can be achieved in the asymptotic regime. 

Another assumption we need when scaling up the network is that $\sigma_2^q$ needs to be fixed as the network size grows. This is accomplished by either fixing $\sigma_2$, i.e., confining the network topology in the class with the same $\sigma_2$, or  increasing $q$, i.e., the number of  message-passing at each step. The practitioners can design the value of $q$ depending on the value of changing $\sigma_2$. 
\begin{condition}\label{con3}
$\E_i\lb s_j^{(k)}\rb=O\lb\frac{1}{K}\rb$ and $\E_i\lb e^{K\sqrt{K}s_j^{(k)}}\rb\le M^{\sqrt{K}}$.
\end{condition}
\begin{theorem}\label{thm2}
Given that the Condition \ref{con1} on the message-passing matrix, and the Condition \ref{con2} on the tail probability of the statistic are satisfied, then the Con-SPRT possesses the order-1 asymptotic optimality for $A, B\to \infty$ and $K\to \infty$:
\begin{align}
&\E_1\lb\T^{(k)}\rb\sim \frac{-\log \alpha}{\sum_{k=1}^K{\cal D}_1^{(k)}}+o({-\log \alpha}), \\ &\E_0\lb\T^{(k)}\rb\sim \frac{-\log \beta}{\sum_{k=1}^K{\cal D}_0^{(k)}}+o({-\log \beta}).
\end{align}
\end{theorem}
The growing rate of $A,B$ and $K$ can be arbitrary. 
\proof
Recalling the inequality in \eqref{Bound_Ineq} and $\lVert\bs_j\rVert_2\le \sqrt{K}\lVert\bs_j\rVert_\infty$, we have
\begin{align}
\E\T\sum_{k=1}^K\cD^{(k)}/K\le \E\lb\eta^{(k)}_{\T^{(k)}}\rb+\frac{\sigma_2^q}{1-\sigma_2^q}\sqrt{K}\,\E\lb \lVert \bs_j\rVert_\infty\rb.
\end{align}
\begin{align}
\E\T&\le \frac{K\E\lb\eta^{(k)}_{\T^{(k)}}\rb}{\sum_{k=1}^K\cD^{(k)}}+\frac{\sigma_2^q}{1-\sigma_2^q}\frac{K \sqrt{K}\,\E\lb \lVert \bs_j\rVert_\infty\rb}{\sum_{k=1}^K\cD^{(k)}}\nonumber\\&\le \frac{KB}{\sum_{k=1}^K\cD^{(k)}}+\frac{\sigma_2^q}{1-\sigma_2^q}{O(\sqrt{K})}.
\end{align}
\begin{align}
\Prob_0\lb D_\T=1\rb&\le e^{-KB}\prod_{j=1}^\infty\Big(\underbrace{\E_1\lb  e^{K\sqrt{K} \lVert\bs_j\rVert_\infty} \rb}_{M^{\sqrt{K}}}\Big)^{\sigma_2^{qj}}\nonumber\\&=\exp\lb-KB+\sqrt{K}\sum_{j=1}^\infty {\sigma_2^{qj}} \log M\rb\nonumber\\&=\exp\lb-KB+\sqrt{K}\frac{\sigma_2^q}{1-\sigma_2^q} \log M\rb
\end{align}
\begin{align}
\E\T &\le \frac{-\log \alpha}{\sum_{k=1}^K{\cal D}_1^{(k)}}+O\lb{\sqrt{K}}\rb, \nonumber\\&=\frac{-\log \alpha}{\sum_{k=1}^K{\cal D}_1^{(k)}}+o\lb\log\alpha\rb\quad \text{as}\;\; K\to \infty \;(\text{thus}\;\; \alpha\to 0).
\end{align}
Thus 
\begin{align}
1\le \frac{\E\T}{\E\T_c}\to 1, \quad \text{as}\;\; K\to \infty \;(\text{thus}\;\; \alpha\to 0).
\end{align}
Order-$1$ asymptotic optimality proved.
\endproof}
\subsection{Refined Approximations to the Error Probabilities}
Although the asymptotic upper bounds in  Lemma \ref{lemma2} are sufficient to reveal the asymptotic optimality of the CA-DSPRT, their constant terms are not specified in analytical form. Thus the analytical characterization in Lemma \ref{lemma2} offers limited guidance for setting the thresholds $\{A, B\}$ such that  the error probability constraints can be met. To address this limitation, we next provide a refined asymptotic approximations to the error probabilities.  

Defining the difference matrix $\Delta_{t}\triangleq\bW^{t}-\bJ$, then the Type-I error probability can be rewritten as 
\begin{align}\label{refinedError}
\Prob_0\lb D^{(k)}_\text{ca}=1\rb=&\, \Prob_0\lb\be_k^T\lb\sum_{j=1}^{\T_\text{ca}^{(k)}}\bW^{q\lb\T_\text{ca}^{(k)}-j+1\rb}\bs_j\rb\ge B\rb\nonumber\\
=&\, \Prob_0\lb\be_k^T\lb\sum_{j=1}^{\T_\text{ca}^{(k)}}\bJ\bs_j+\sum_{j=1}^{\T_\text{ca}^{(k)}}\Delta_{q\lb\T_\text{ca}^{(k)}-j+1\rb}\bs_j\rb\ge B\rb\nonumber\\=&\,\Prob_0\lb \sum_{j=1}^{\T_\text{ca}^{(k)}}{\bf 1}^T\bs_j+K\be_k^T\sum_{j=1}^{\T_\text{ca}^{(k)}}\Delta_{q\lb\T_\text{ca}^{(k)}-j+1\rb}\bs_j\ge KB\rb.
\end{align}
Note that $\bW$ under Condition \ref{con1} satisfies that $\bW^t\to \bJ$ as $t\to \infty$ \cite{Xiao04}. Drawing on this property, we approximate $\Delta_t\approx {\bf 0}$, for $t> t_0 q$, where $t_0$ can be selected to be sufficiently large according to $\sigma_2(\bW)$ and $q$, and is independent of $A, B$. The smaller $\sigma_2(\bW)$ is, or the greater $q$ is, the faster that $\bW^t$ approaches $\bJ$ and $\Delta_t$ approaches ${\bf 0}$. Applying the Markov inequality to \eqref{refinedError}, we have
\begin{align}\label{approx_alpha}
\Prob_0\lb D^{(k)}_\text{ca}=1\rb&\le {e^{-KB}}\;{\E_0\lb \prod_{j=1}^{\T_\text{ca}^{(k)}}\prod_{\ell=1}^K l_j^{(\ell)}\exp\lb\be_k^T\lb K\sum_{j=1}^{\T_\text{ca}^{(k)}}\Delta_{q\lb\T_\text{ca}^{(k)}-j+1\rb}\bs_j\rb\rb\rb}
\nonumber\\&=e^{-KB}\,{\E_1\lb \exp\lb\be_k^T\lb K\sum_{j=1}^{\T_\text{ca}^{(k)}}\Delta_{q\lb\T_\text{ca}^{(k)}-j+1\rb}\bs_j\rb\rb\rb}
\nonumber\\&\approx e^{-KB}\,{\E_1\lb \exp\lb\be_k^T\lb K\sum_{j=\T_\text{ca}^{(k)}-t_0+1}^{\T_\text{ca}^{(k)}}\Delta_{q\lb\T_\text{ca}^{(k)}-j+1\rb}\bs_j\rb\rb\rb}
\nonumber\\&\approx e^{-KB}\,\underbrace{{\E_1\lb \exp\lb\be_k^T\lb K\sum_{j=1}^{t_0}\Delta_{q j}\bs_j\rb\rb\rb}}_{{\cal C}_\alpha},
\end{align}
where the constant factor ${\cal C}_\alpha$ can be readily computed by simulation since $t_0$ is a prefixed number.  Similarly, we can derive the same approximation to the Type-II error probability:
\begin{align}\label{approx_beta}
\Prob_1\lb D^{(k)}_\text{ca}=0\rb&\le {e^{-KA}}\;{\E_1\lb \prod_{j=1}^{\T_\text{ca}^{(k)}}\prod_{\ell=1}^K 1/l_j^{(\ell)}\exp\lb\be_k^T\lb K\sum_{j=1}^{\T_\text{ca}^{(k)}}\Delta_{q\lb\T_\text{ca}^{(k)}-j+1\rb}\bs_j\rb\rb\rb}
\nonumber\\&=e^{-KA}\,{\E_0\lb \exp\lb\be_k^T\lb K\sum_{j=1}^{\T_\text{ca}^{(k)}}\Delta_{q\lb\T_\text{ca}^{(k)}-j+1\rb}\bs_j\rb\rb\rb}
\nonumber\\&\approx e^{-KA}\,{\E_0\lb \exp\lb\be_k^T\lb K\sum_{j=\T_\text{ca}^{(k)}-t_0+1}^{\T_\text{ca}^{(k)}}\Delta_{q\lb\T_\text{ca}^{(k)}-j+1\rb}\bs_j\rb\rb\rb}
\nonumber\\&\approx e^{-KA}\,\underbrace{{\E_0\lb \exp\lb\be_k^T\lb K\sum_{j=1}^{t_0}\Delta_{q j}\bs_j\rb\rb\rb}}_{{\cal C}_\beta}.
\end{align}
In essence, \eqref{approx_alpha} and \eqref{approx_beta} further specify the constant terms in Lemma \ref{lemma2}, or tighten the constant ${\cal B}_k$ in \eqref{const_proof}. As we will show through the simulations in Section V, these bounds accurately characterize the error probabilities of the CA-DSPRT with proper $t_0$. By the virtue of these refined approximations, the practitioners can determine the thresholds to satisfy the error probability constraints in \eqref{ST} by
\begin{align}
A=-\frac{1}{K}\log \frac{\beta}{{\cal C}_\beta}, \;\;\text{and}\;\; B=-\frac{1}{K}\log \frac{\alpha}{{\cal C}_\alpha},
\end{align}
which considerably simplifies the thresholds selection for the CA-DSPRT.
\ignore{Let us assume
\begin{align}
\bEta_t=\bP\lb\bEta_{t-1}+\bs_t\rb,\quad \bEta_0=\bP\bs_0=\bP\by
\end{align}}

\ignore{\subsection{Exact Characterization}
We begin from the false alarm probability. Denote the event $\cE\triangleq\{\bS_t: \be_\ell^T\bS_t\in (-A,B)\}$, the indicator function ${I}(\cE)\triangleq \mathbbm{1}_{\{\cE\}}$, and the false alarm probability under $\bS_0=\by$ as
\begin{align}
 h(\by)\triangleq\Prob_0^\by\lb D^i_{\T_i}=1\rb=\E_0^\by\lb I\lb \be_\ell^T\bS_{\T^{(\ell)}}\ge B\rb\rb.
\end{align}
Then we have the exact characterizations 
\begin{align}\label{Eq1}
  h(\by)&= \E_0^\by\lb I\lb \be_\ell^T\bS_{\T^{(\ell)}}\ge B\rb I\lb\T^{(\ell)}=1\rb\rb +\E_0^\by\lb I\lb \be_\ell^T\bS_{\T^{(\ell)}}\ge B\rb I\lb\T^{(\ell)}>1\rb\rb
\end{align}
\begin{align}
  \text{First term of the rhs of \eqref{Eq1}}&=\E_0^\by\lb I\lb \be_\ell^T\bS_1\ge B\rb\rb \nonumber\\&=\underbrace{\E_0^\by \left[ I\lb \be_\ell^T\bW\lb\by+\bs_{1}\rb\ge B \rb\right]}_{u(\by)}
\end{align}

\begin{align}
  \text{Second term of the rhs of \eqref{Eq1}}&=\E_0^\by\lb I\lb \be_\ell^T \bS_{\T^{(\ell)}}\ge B\rb I\lb \be_\ell^T\bS_1\in(-A,B)\rb\rb \nonumber\\&=\int_{\bS_1\in \cE}\E_0^\by\lb\left. I\lb \be_\ell^T\bS_{\T^{(\ell)}}\ge B\rb \right| \bS_1\rb \Prob_0\lb d\bS_1\rb
\end{align}
Then we have the exact characterization for error proability:
\begin{align}
  h(\by)=u(\by)+\int_{\bx\in \cE} h(\bx)\;f_{\bW\by+\bs_1}(\bx)d\bx.
\end{align}
{Here we can derive a necessary condition for the consensus-based method to yield asymptotic optimality. The condition should be on the tail distribution of the individual statistic. At least, the necessary condition is that $u(0)\sim e^{-KB}$, which leads to regularity for the distribution of the statistics.}

\begin{align}
h(\by)=\int \left[I\lb \be_\ell^T\bx\ge B\rb+I\lb -A\le \be_\ell^T\bx\le B\rb h(\bx)\right]f_{\bW\by+\bs_1}(\bx)d\bx
\end{align}
Let us assume 
\begin{align}
f_{\bW\by+\bs_1}(\bx)=g(\by)\phi(\bx).
\end{align}
Then 
\begin{align}
h(\by)=g(\by)\underbrace{\int \left[I\lb \be_\ell^T\bx\ge B\rb+I\lb -A\le \be_\ell^T\bx\le B\rb h(\bx)\right]\phi(\bx)d\bx}_{\eta(A, B)}
\end{align}
Thus we have
\begin{align}
h(\by)&=g(\by)\int \left[I\lb \be_\ell^T\bx\ge B\rb+I\lb -A\le \be_\ell^T\bx\le B\rb\eta(A,B)g(\bx)\right]\phi(\bx)d\bx\\&=g(\by)\int \left[I\lb \be_\ell^T\bx\ge B\rb+I\lb -A\le \be_\ell^T\bx\le B\rb\frac{h(\by)}{g(\by)}g(\bx)\right]\phi(\bx)d\bx\\&=g(\by)\underbrace{\int I\lb \be_\ell^T\bx\ge B\rb \phi(\bx) \;d\bx}_{c_0(B)}+h(\by)\underbrace{\int I\lb -A\le \be_\ell^T\bx\le B\rb g(\bx) \phi(\bx) \; d\bx}_{c_1(A,B)}
\end{align}
leading to 
\begin{align}
h(\by)=\frac{c_0(B)g(\by)}{1-c_1(A,B)}.
\end{align}
{\color{red} Closed-form approximation; numerical approximation. }
\subsection{Exponential Samples}
\begin{align}
f_{\bW\by+\bs}\lb\bx\rb=\prod_{i=1}^K \lambda_i\exp\lb-\lambda_i x_i +\lambda_i\lb\bW\by\rb_i\rb=\underbrace{\prod_{i=1}^K\lambda_i\exp\lb -\lambda_ix_i\rb}_{\phi(\bx)}\underbrace{\prod_{i=1}^K\exp\lb \lambda_i\lb \bW\by\rb_i\rb}_{g(\by)}
\end{align}}

\section{Numerical Results}
In this section, we examine the performance of the two message-passing-based distributed sequential tests using two sample distributions. Extensive numerical results will be provided to corroborate the theoretical results developed in this work. 

We begin by deciding the weight matrix for the consensus algorithm. There are multiple methods to choose $\bW$ such that Condition \ref{con1} can be satisfied, one of which is assigning equal weights to the data from neighbours \cite{Xiao04,Sahu16}. In specific, the message-passing protocol \eqref{protocol_2} becomes
\begin{align}
\widetilde\eta^{(k)}_{t, m}&=\lb1-|{\cal N}_k|\delta\rb\,\widetilde\eta^{(k)}_{t, m-1}+\delta\sum_{\ell\in {\cal N}_k}\, \widetilde \eta^{(\ell)}_{t, m-1},\nonumber\\&=\widetilde\eta^{(k)}_{t, m-1}+\delta\sum_{\ell\in {\cal N}_k}\, \lb\widetilde \eta^{(\ell)}_{t, m-1}-\widetilde\eta^{(k)}_{t, m-1}\rb
\;\; \text{for}\;\; m=1, 2, \ldots, q.
\end{align}
As such, the weight matrix admits
\begin{align}\label{EqualM}
\bW={\bf I}-\delta\big(\underbrace{\bD-\bA}_{\bL}\big),
\end{align}
where ${\bA}$ is the adjacent matrix, whose entries $a_{i,j}=1$ if and only if $\{i, j\}\in {\cal E}$,  and $\bD\triangleq \text{diag}\left\{|{\cal N}_1|, |{\cal N}_2|, \ldots, |{\cal N}_K|\right\}$ is the called the degree matrix. Their difference is called the Laplacian matrix $\bL$ which is positive semidefinite. First, $\bW{\bf 1}={\bf 1}$ and ${\bf 1}^T\bW={\bf 1}^T$ hold for any value of $\delta$ due to the definition of $\bL$ (i.e., $\bL {\bf 1}={\bf 0}$ and ${\bf 1}^T\bL={\bf 0}^T$). Second, note that $\bW$ in \eqref{EqualM} is a symmetric matrix, whose second largest singular value $$\sigma_2\lb\bW\rb=\max\left\{1-\delta\lambda_{n-1}\lb\bL\rb, \delta\lambda_1\lb \bL\rb-1\right\}<1,$$ if and only if $0<\delta<\frac{2}{\lambda_{1}\lb\bL\rb}$. Within this interval, we set $\delta=\frac{2}{\lambda_1(\bL)+\lambda_{n-1}(\bL)}$ such that the constant terms in  Theorem \ref{thm1} are minimized, or equivalently, $\sigma_2(\bW)$ is minimized. Condition \ref{con2} on the LLR distribution will be verified for the particular testing problem in Section VI-A and B respectively.

In the following experiments, we consider a specific class of network topology as an example, where each sensor is connected to sensors within $m$ links, as denoted as $\cG(n,m)$. 
For instance, in $\cG(12,2)$ illustrated in Fig. \ref{Network}, each sensor is connected to the sensors within range $2$.

\begin{figure}
\centering
\includegraphics[width=0.45\columnwidth]{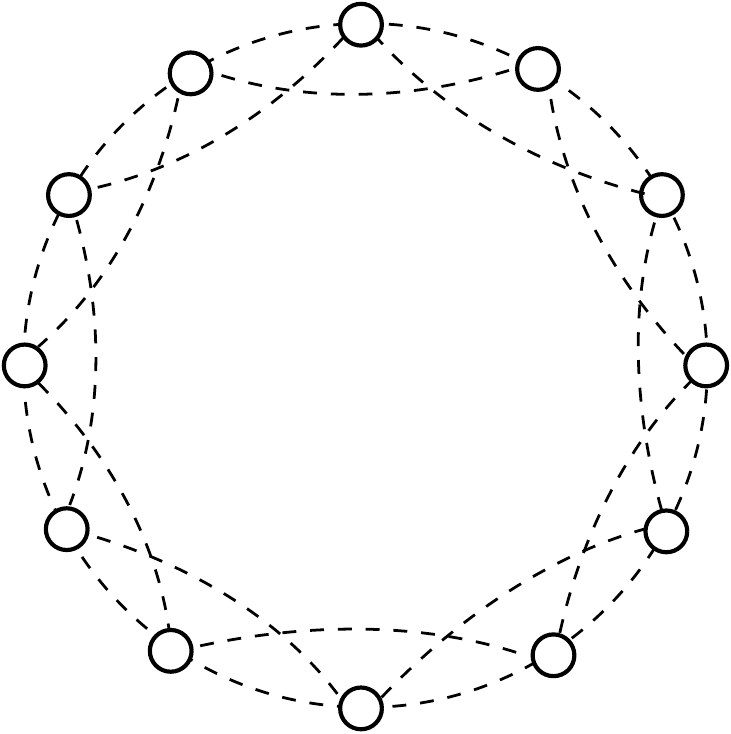}
\caption{The sensor network represented by a graph $\cG(12,2)$.}\label{Network}
\end{figure}

\subsection{Gaussian Samples}
First we consider the problem of detecting the mean-shift of Gaussian samples. Without loss of generality, the variance is assumed to be one in the hypothesis testing problem, i.e., 
\begin{align*}
&\cH_0: X^{(k)}_t\sim {\cal N}(0, 1),\\
&\cH_1: X^{(k)}_t\sim {\cal N}(\mu, 1),\;\; k=1, 2, \ldots, K,\;\; t=1, 2, \ldots
\end{align*}
The LLR at sensor $k$ is given by
\begin{align}\label{LLR_Gaussian}
s_t^{(k)}=X_t^{(k)}\mu-\frac{\mu^2}{2}\sim \left\{
\begin{array}{ll}
{\cal N}\lb -\frac{\mu^2}{2},\mu^2\rb, & \text{under }{\cal H}_0,\\
{\cal N}\lb \frac{\mu^2}{2},\mu^2\rb, &  \text{under }{\cal H}_1,
\end{array}\right.
\end{align}
with KLDs equal to
$$\cD_0^{(k)}=\cD_1^{(k)}=\frac{\mu^2}{2}.$$ 
Note that
\begin{align}
&\E_0\lb e^{K\sqrt{K}|s_t^{(k)}|}\rb=\E_1\lb e^{K\sqrt{K}|s_t^{(k)}|}\rb \nonumber\\&=e^{\lb K\sqrt{K}+1\rb K\sqrt{K}\mu^2/2}\Phi\lb \lb K\sqrt{K}+\frac{1}{2}\rb\mu\rb +e^{\lb K\sqrt{K}-1\rb K\sqrt{K}\mu^2/2}\Phi\lb \lb K\sqrt{K}-\frac{1}{2}\rb\mu\rb
\end{align}
turns out to be a constant, thus the LLR \eqref{LLR_Gaussian} satisfies the Condition 2. As a result, the CA-DSPRT achieves the order-$2$ asymptotically optimal performance at every sensor. Moreover, for comparison, we will also plot the analytical bounds derived in \cite{Sahu16} for the error probabilities of the CA-DSPRT with $q=1$, i.e., 
\begin{align}\label{Sahu_alpha}
\Prob_0\lb D^{(k)}_\text{ca}=1\rb\le \frac{2\exp\lb -\frac{\sigma_2(\bW)KB}{8\lb K\sigma_2(\bW)^2+1\rb}\rb}{1-\exp\lb -\frac{K\cD_1}{4\lb K\sigma_2(\bW)^2+1\rb}\rb},
\end{align}
and for the stopping time characterization, i.e., 
\begin{align}\label{Sahu_ET}
\E_i\lb\T^{(k)}_\text{ca}\rb\le\frac{10\lb K\sigma_2^2(\bW)+1\rb}{7} \E_i\lb\T_\text{c}\rb, \quad i =0, 1,
\end{align}
given the same error probabilities. They are referred to as the existing analysis for the CA-DSPRT. Note that the analysis in \cite{Sahu16} does not reveal the asymptotic optimality of $\T_\text{ca}^{(k)}$.
\begin{figure}
\centering
\subfigure[]{\includegraphics[width=0.8\textwidth]{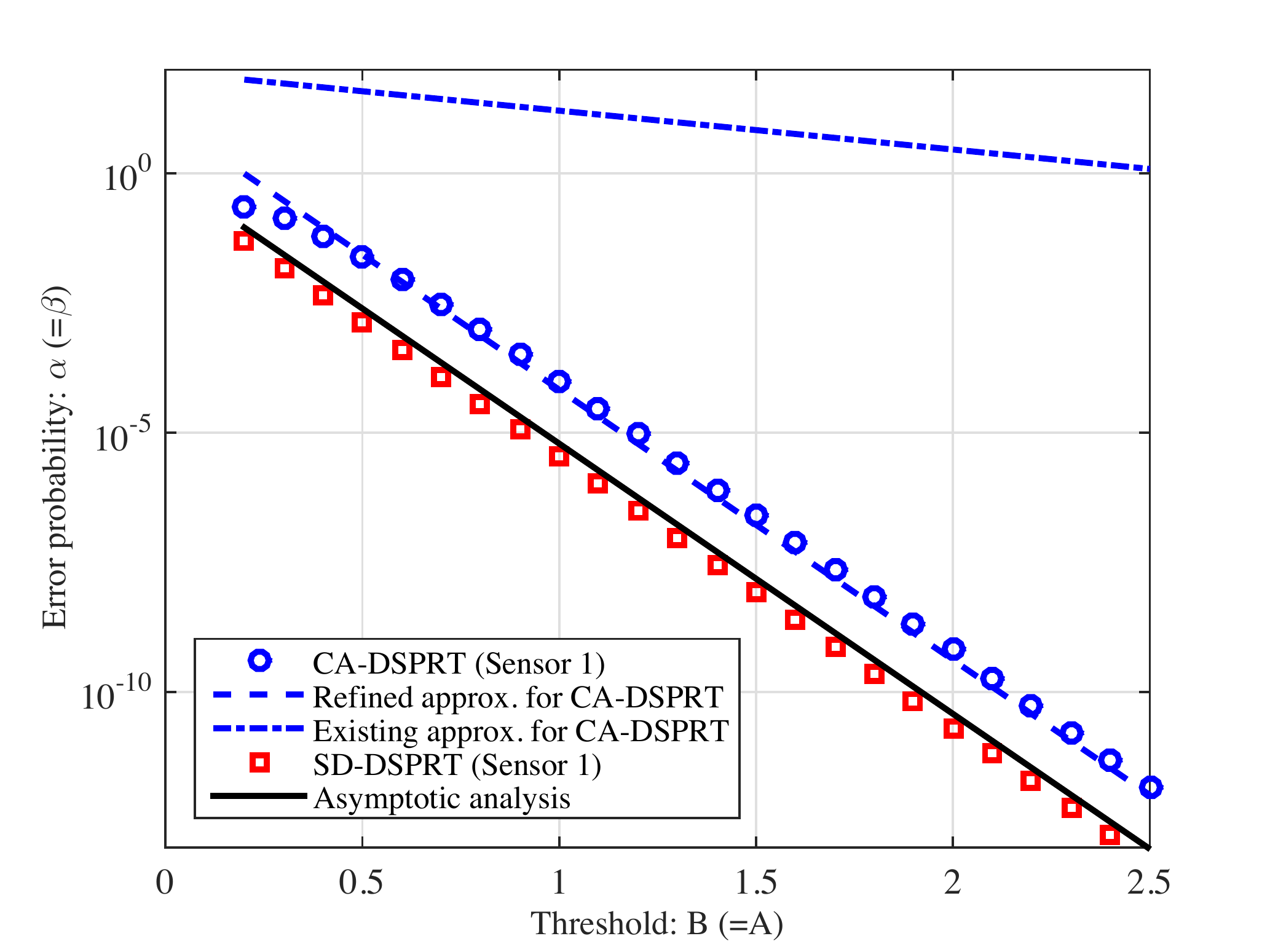}}
\subfigure[]{\includegraphics[width=0.8\textwidth]{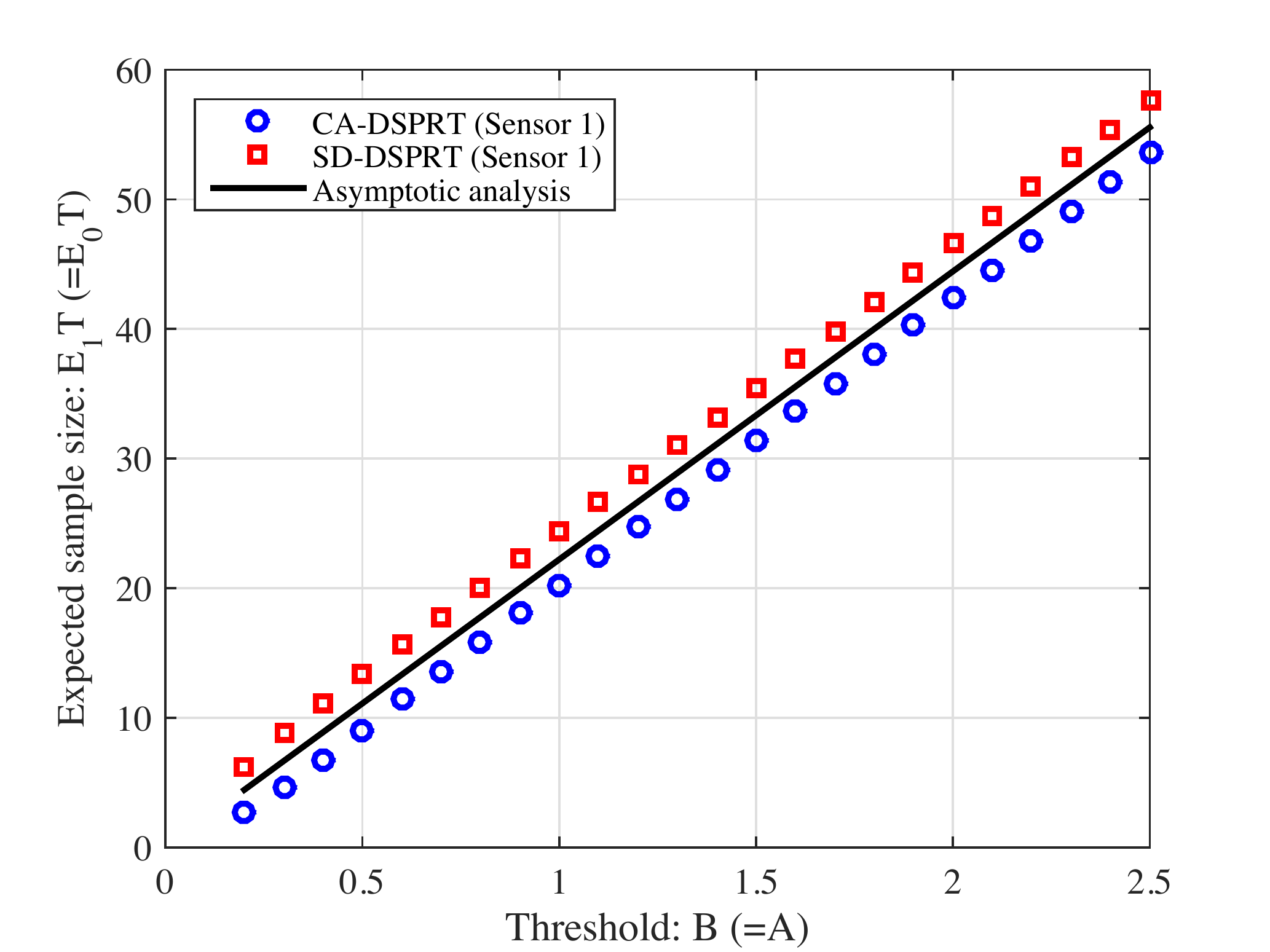}}
\caption{The false alarm probability and expected sample size in terms of the threshold $B$ for the network $\cG\lb 12,2\rb$. }\label{C12_2_1}
\end{figure}
\begin{figure}
\centering
\includegraphics[width=0.86\textwidth]{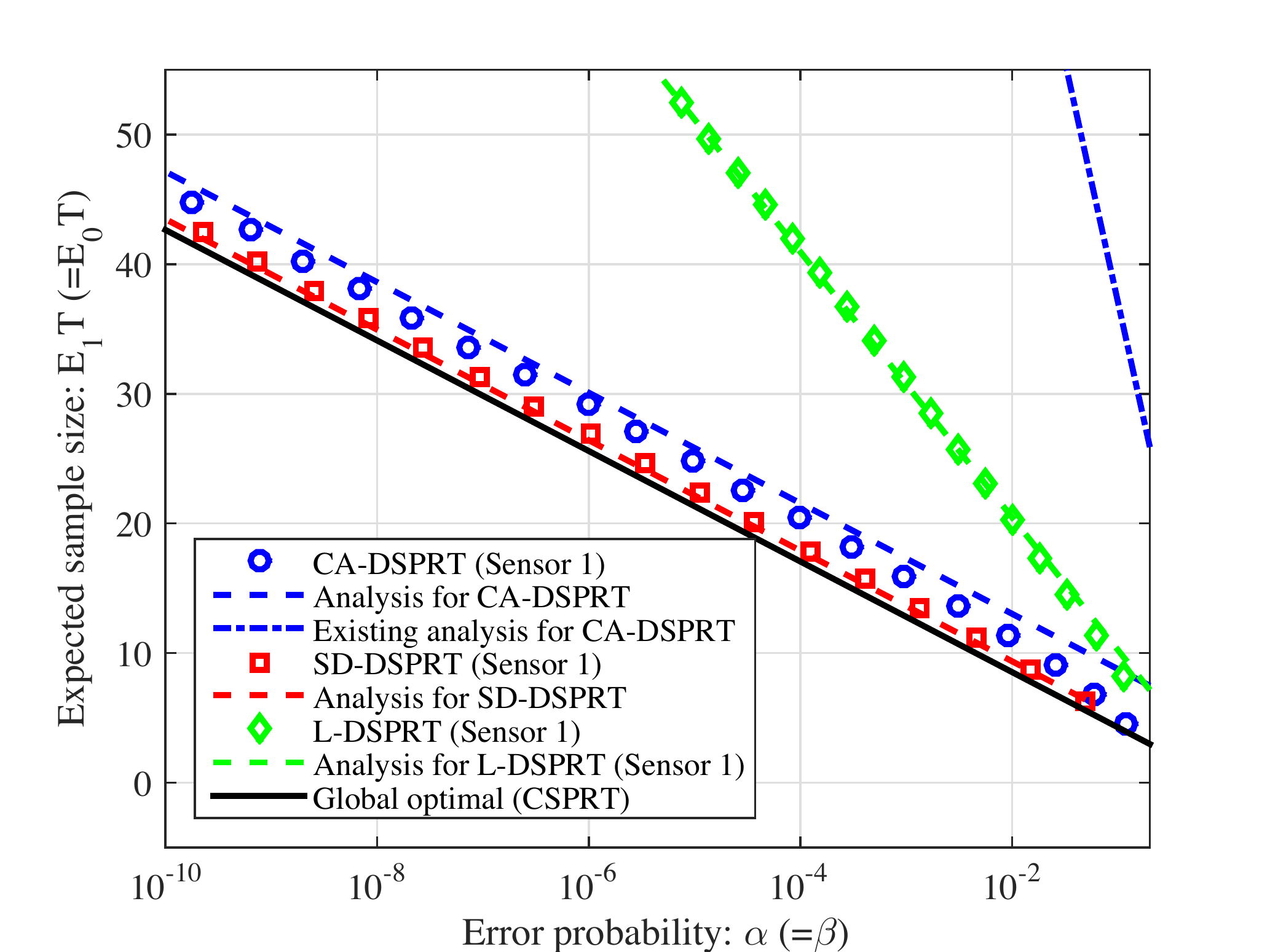}
\caption{Stopping time performances of different message-passing-based distributed sequential tests for the network $\cG\lb 12,2\rb$.}\label{C12_2_2}
\end{figure}

Since sensors in the network have the identical sample distributions, identical adjacent sensors and message-passing weights, they should result in identical test performance under SD-DSPRT, CA-DSPRT and L-DSPRT respectively. Thus, we only plot the performance at sensor $1$ for illustrative purpose, bearing in mind that the performance at other sensors align identically to that of sensor $1$.  In addition, due to the symmetry of the statistic distribution under $\cH_0$ and $\cH_1$, it is sufficient to plot  the performance under one hypothesis, while the other follows identically. Accordingly, we demonstrate the false alarm probability $\alpha$ and expected sample size $\E_1\lb\T\rb$ henceforth.

Let us first consider the sensor network $\cG\lb 12, 2\rb$ as depicted in Fig. \ref{Network} whose weight matrix \eqref{EqualM} has $\sigma_2\lb\bW\rb=0.6511$. The alternative mean is set as $\mu=0.3$. The number of message-passings for the CA-DSPRT at each time slot is fixed as $q=1$. Fig. \ref{C12_2_1} illustrates how the error probability and expected sample size change with the threshold in SD-DSPRT and CA-DSPRT. Specifically, Fig. \ref{C12_2_1}-(a) shows that the error probability of the SD-DSPRT (marked in red squares) is the same as that of the CSPRT (marked in black solid line), i.e., $e^{-KB}$, while that of the CA-DSPRT (marked in blue circles) aligns parallel to the solid line, as expected by Lemma \ref{lemma2}. Moreover, the refined approximation \eqref{approx_alpha} accurately characterizes the error probability with $t_0=10$ whereas  the curve by \eqref{Sahu_alpha} deviates far away from the simulation result. Fig. \ref{C12_2_1}-(b) shows that the expected sample sizes of SD-DSPRT and CA-DSPRT align parallel to that of the CSPRT as the threshold increases, which agrees with  \eqref{MI_ET1} and Lemma \ref{lemma1}. 

Combining \ref{C12_2_1}-(a) and (b) gives the performance curves as shown in Fig. \ref{C12_2_2}.  First, both the performances of SD-DSPRT and CA-DSPRT only deviate from the global optimal performance by a constant margin as $A, B\to \infty$, exhibiting the order-$2$ asymptotic optimality as stated in Theorems \ref{thm0} and \ref{thm1}. Particularly, SD-DSPRT shows relatively smaller degradation compared to the CA-DSPRT. However, this superiority is gained at the cost of substantially heavier communication overhead. In addition, we also plot the performance of L-SRPRT (marked in green diamonds), which is clearly seen to be sub-optimal and diverges from the optimal performance by orders of magnitude. The curve by \eqref{Sahu_ET} again substantially deviates from the true performance.

Another experiment is demonstrated in Figs. \ref{C20_2_1} and \ref{C20_2_2} based on the network $\cG(20,2)$ with $\sigma_2\lb\bW\rb=0.8571$. It is seen that our analyses still accurately characterize the performances of SD-DSPRT and CA-DSPRT in the asymptotic regime where $A, B\to\infty$ and $\alpha, \beta\to 0$. Note that for $q=1$, the constant gap between the CA-DSPRT and CSPRT is greater compared to the preceding simulation due to a larger $\sigma_2\lb\bW\rb$. Interestingly and expectedly, if we  increase the number of message-passings by one, i.e., $q=2$, the constant gap between the CA-DSPRT and CSPRT can be substantially reduced. This implies that, in practice, we can control the number of message-passings in the consensus algorithm to push the CA-DSPRT closer to the global optimum. Nevertheless, changing $q$ only varies the constant gap; in any case, the order-$2$ asymptotic optimality of the SD-DSPRT and CA-DSPRT are clearly seen in Fig. \ref{C20_2_2}. 
\begin{figure}
\centering
\subfigure[]{\includegraphics[width=0.8\textwidth]{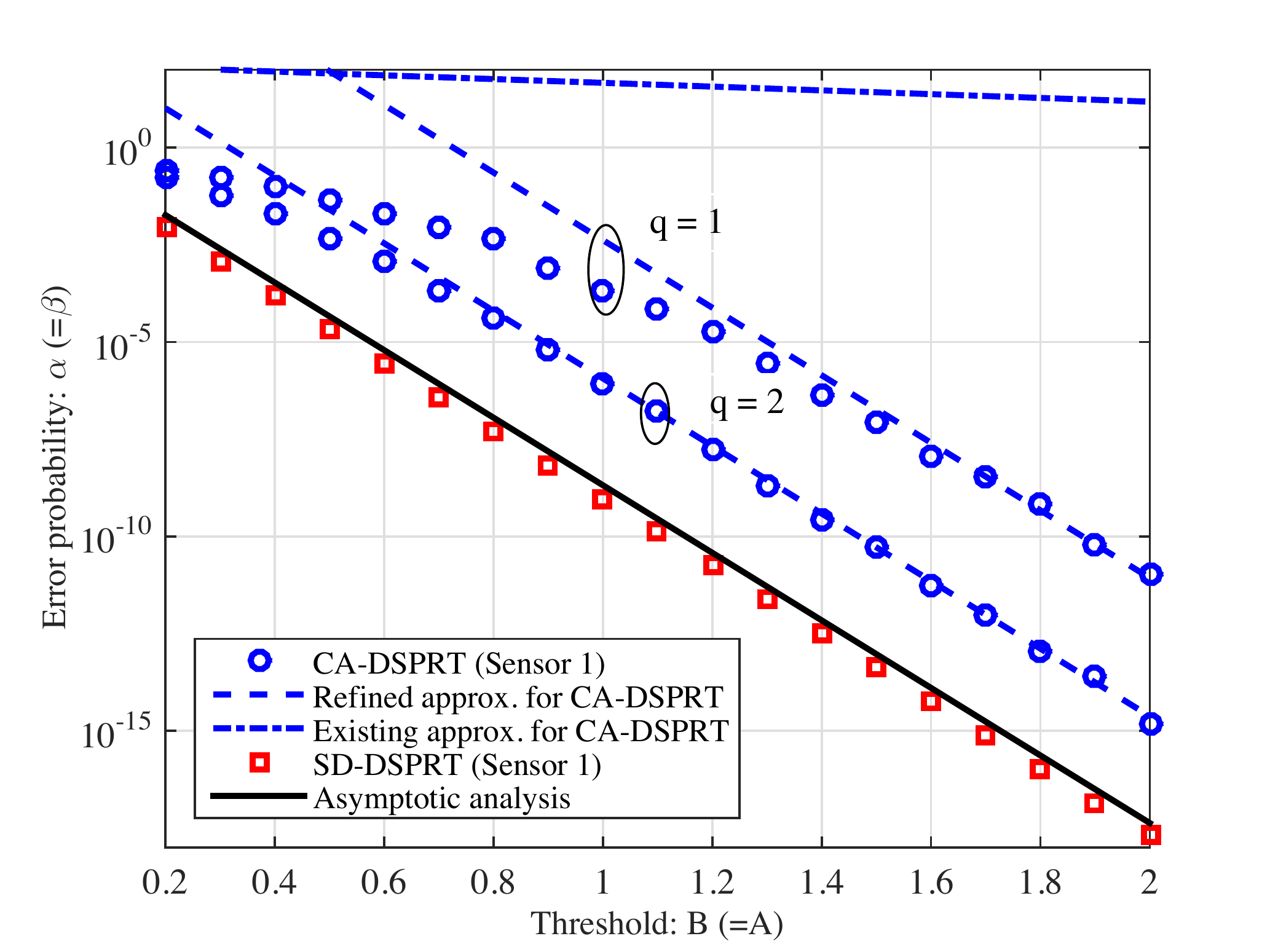}}
\subfigure[]{\includegraphics[width=0.8\textwidth]{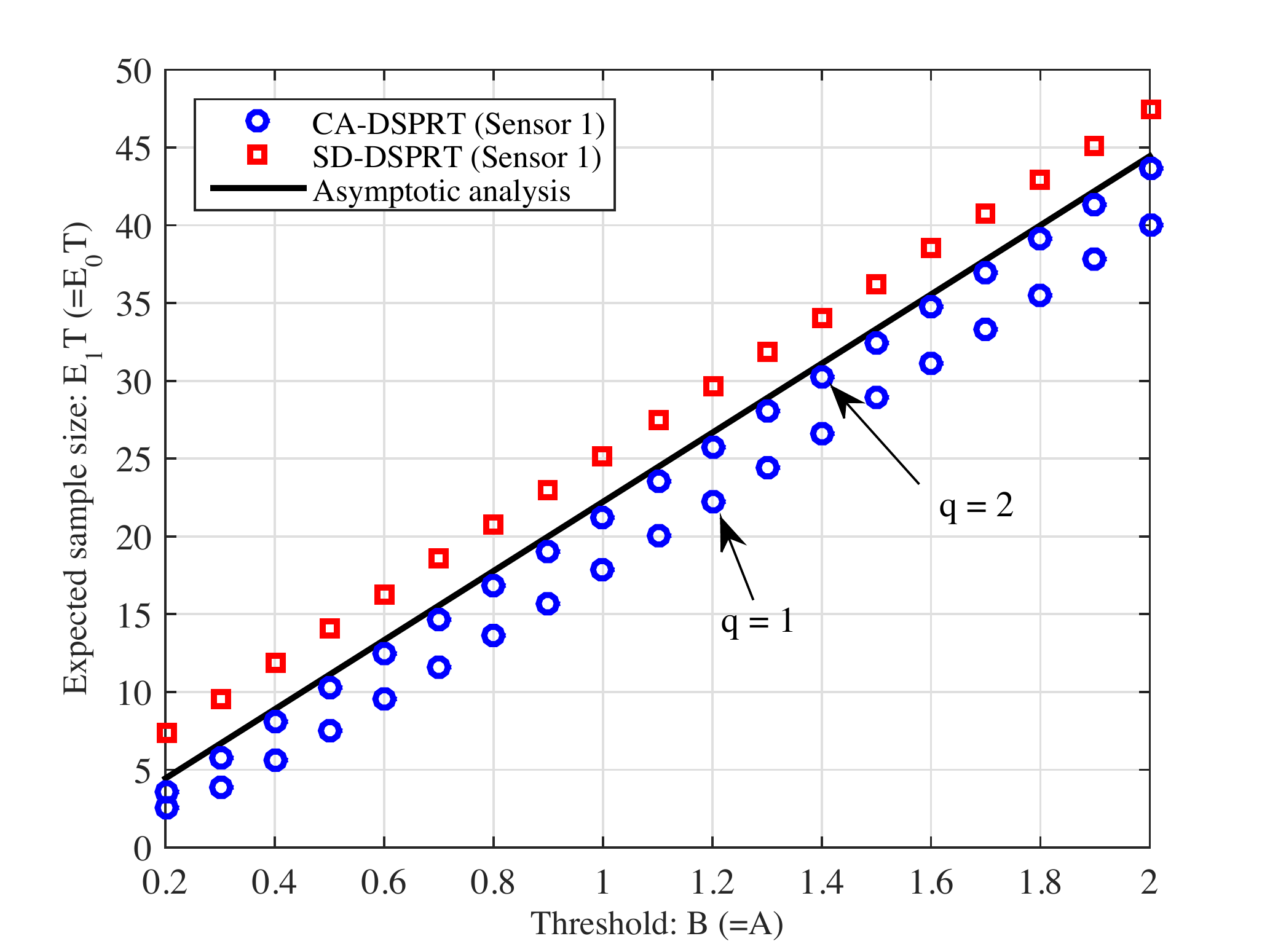}}
\caption{The false alarm probability and expected sample size in terms of the threshold $B$ for the network $\cG\lb 20,2\rb$. }\label{C20_2_1}
\end{figure}
\begin{figure}
\centering
\includegraphics[width=0.86\textwidth]{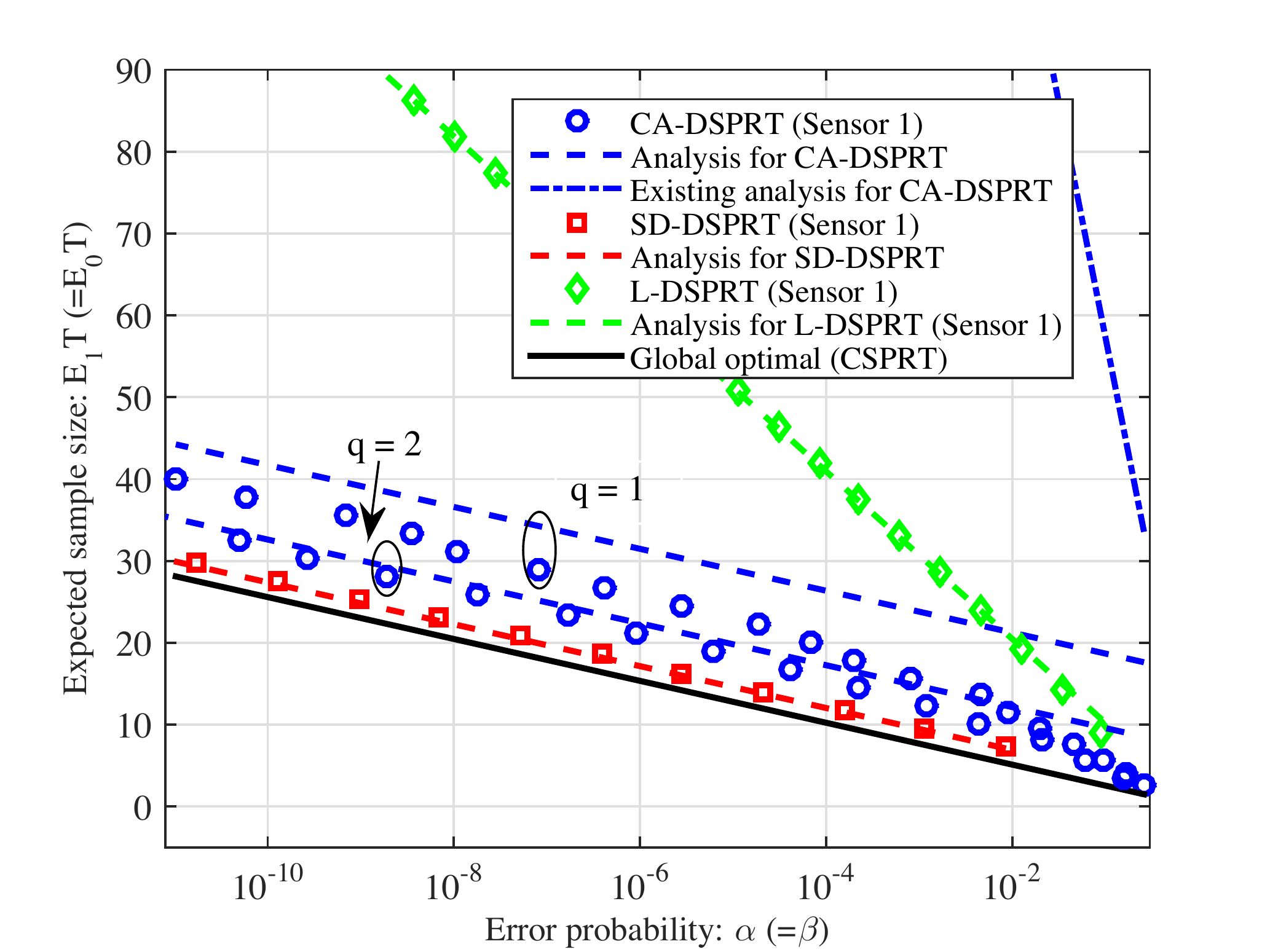}
\caption{Stopping time performances of different message-passing based distributed sequential tests for the network $\cG\lb 20, 2\rb$. }\label{C20_2_2}
\end{figure}
\subsection{Laplacian Samples}
Next we apply the message-passing-based distributed sequential tests to detect the mean-shift of the Laplace samples, whose the dispersion around the mean is wider than the Gaussian samples. Laplace distribution is widely used for modelling the data with heavier tails, with applications in speech recognition, biological process analysis, and credit risk prediction in finance. Without loss of generality, we assume $b=1$ for the probability density function $f\lb x\rb=\frac{1}{2b}\exp\lb -\frac{|x-\mu|}{b}\rb$, i.e., 
\begin{align*}
&\cH_0: X^{(k)}_t\sim \text{Laplace}\lb 0,{1}\rb, \\
&\cH_1: X^{(k)}_t\sim \text{Laplace}\lb \mu,{1}\rb, \quad k=1, 2, \ldots, K,\;\; t=1, 2, \ldots,
\end{align*}
with the LLR at sensor $k$ given by
\begin{align}
s_t^{(k)}=\left\{
\begin{array}{ll}
\mu & X_t^{(k)}<0,\\
2X_t^{(k)}-\mu & 0\le X_t^{(k)}\le \mu,\\
 -\mu & X_t^{(k)}\ge \mu,
\end{array}\right.
\end{align}
and KLDs equal to
\begin{align}
\cD_0^{(k)}=\cD_1^{(k)}=|\mu|-1+e^{-|\mu|}.
\end{align}
Under this problem setting, Condition \ref{con2} is easily verified by noting that $|s_t^{(k)}|$ is bounded above by $\mu$, thus $\E_i\lb e^{K\sqrt{K}|s_t^{(k)}|}\rb$ is bounded above by constant $e^{K\sqrt{K}\mu}$.
\begin{figure}
\centering
\subfigure[]{\includegraphics[width=0.8\textwidth]{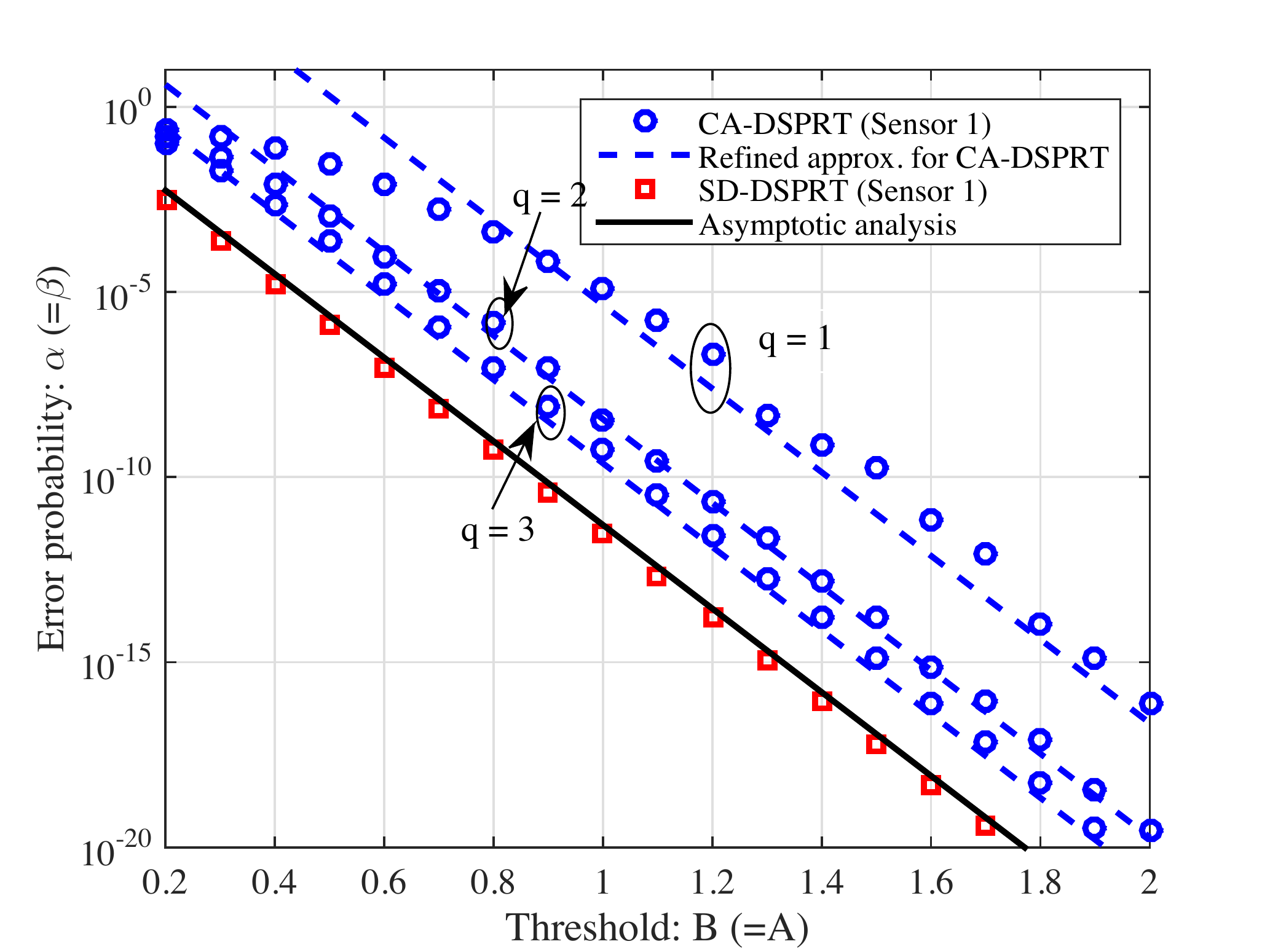}}
\subfigure[]{\includegraphics[width=0.8\textwidth]{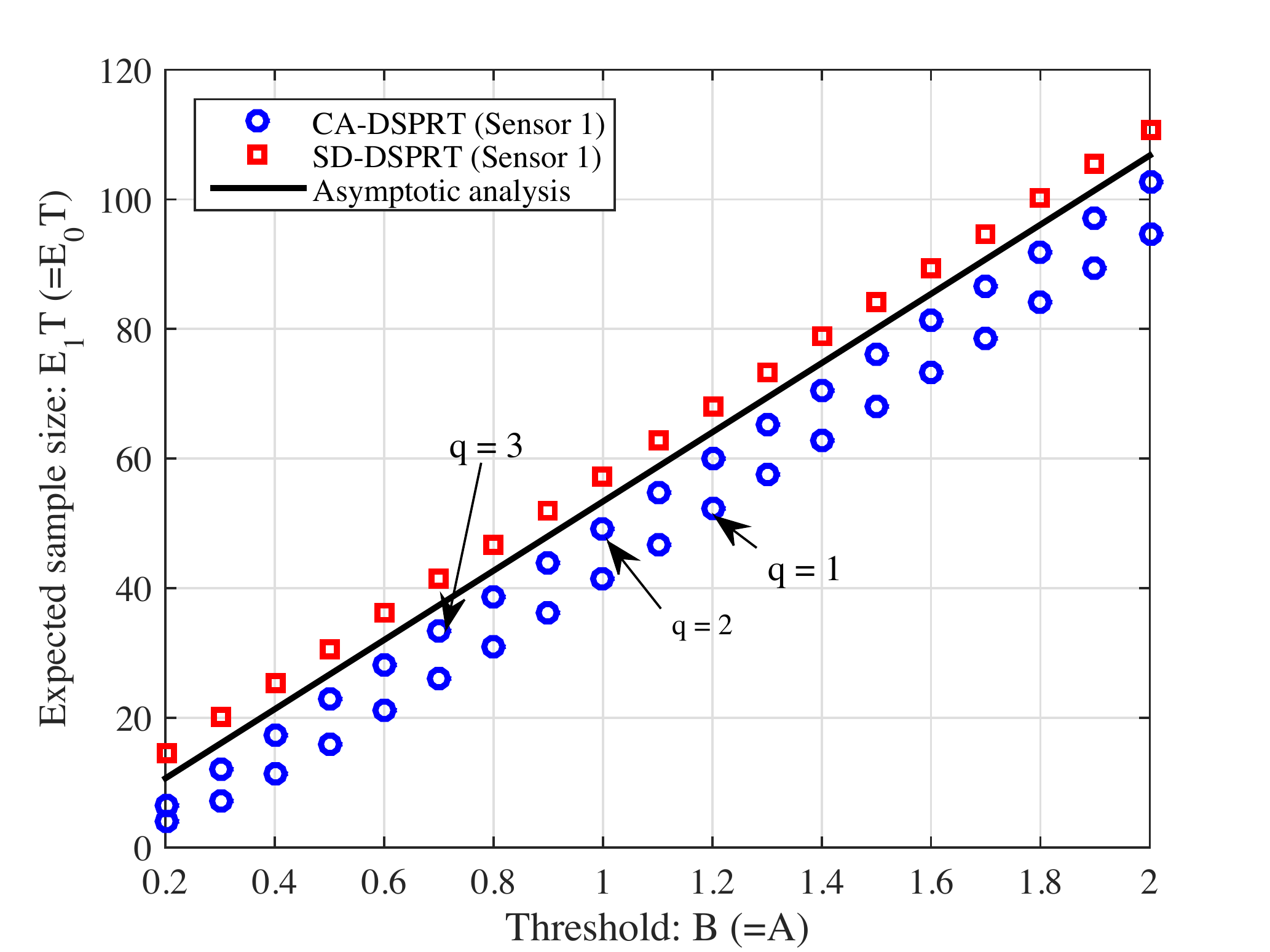}}
\caption{The false alarm probability and expected sample size in terms of the threshold $B$ for the network $\cG\lb 26,2\rb$. }\label{C26_2_1}
\end{figure}
\begin{figure}
\centering
\includegraphics[width=0.86\textwidth]{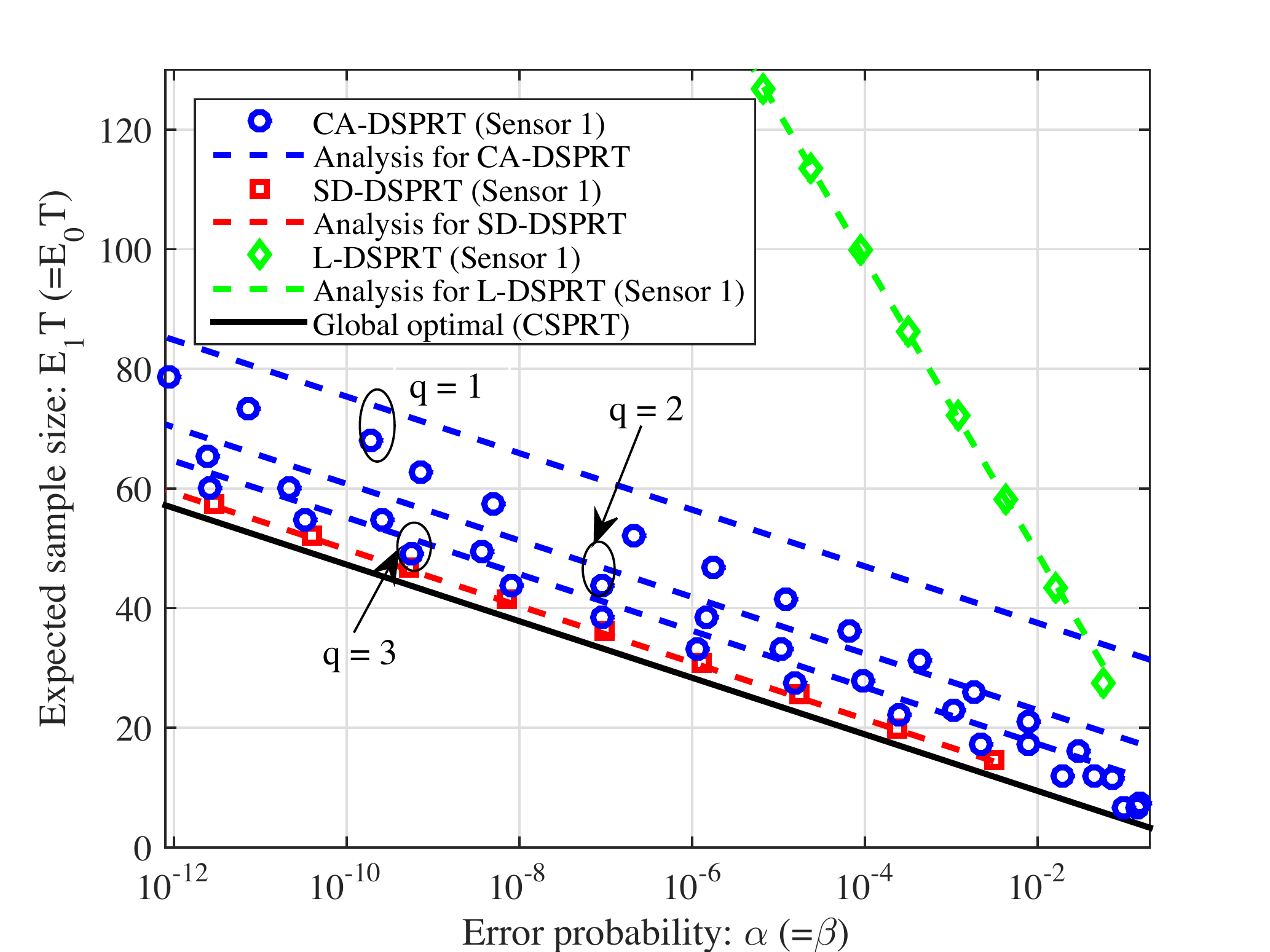}
\caption{Stopping time performances of different message-passing based distributed sequential tests for the network $\cG \lb 26, 2\rb$.}\label{C26_2_2}
\end{figure}

We consider the network $\cG \lb 26, 2\rb$ whose weight matrix \eqref{EqualM} has $\sigma_2\lb\bW\rb=0.9115$, and the alternative mean is fixed as $\mu=0.2$. In Fig. \ref{C26_2_1}-(a), the error probability of the SD-DSPRT is the same as that given by the  asymptotic analysis, i.e., $e^{-KB}$, while that of the CA-DSPRT stays parallel to the asymptotic result. Similarly, the expected sample sizes shown in Fig. \ref{C26_2_1}-(b) also agree with the asymptotic analysis. Again, slightly increasing $q$ is seen to quickly  narrow down the constant gaps.  In Fig. \ref{C26_2_2}, both SD-DSPRT and CA-DSPRT (for any value of $q$) deviate from the global optimal performance by a constant margin as the error probabilities go to zero. In particular,  the CA-DSPRT becomes nearly the same  as the SD-DSPRT for $q=3$, with much less communication overhead. In contrast, the naive L-DSPRT substantially diverges from the global optimum for small error probability.


\section{Conclusions}
In this work, we have investigated the fully distributed sequential hypothesis testing, where each sensor performs the sequential test while exchanging information with its adjacent sensors. Two message-passing-based schemes have been considered. The first scheme hinges on the dissemination of the data samples over the network, and we have shown that it achieves the order-$2$ asymptotically optimal performance at all sensors. However, the dissemination of data samples across the network becomes impractical as the network size grows. In contrast, the second scheme builds on the well-known consensus algorithm, that only requires the exchange of local decision statistic, thus requiring significantly lower communication overhead. We have shown that the consensus-algorithm-based distributed sequential test also achieves the order-$2$ asymptotically optimal performance at every sensor. 
Several future directions can be pursued. First, one can improve the SD-DSPRT by introducing more efficient sample dissemination scheme. Second, note that Condition \ref{con1} on the network topology is in fact more strict than that given in \cite{Xiao04}. It would be interesting to investigate whether the same condition in \cite{Xiao04} can lead to the asymptotic optimality of the CA-DSPRT. It is also of interest to integrate the quantized consensus algorithm into the distributed sequential test, where local decision statistics are quantized into finite bits before message-passing. Moreover, it is practically and theoretically interesting to study the effect of  the time-varying network topology and link failures on  the distributed sequential test. Last but not least, it is of interest to consider fully distributed sequential change-point detection and its asymptotic property. 


\bibliographystyle{IEEEtran}
\bibliography{IEEEabrv,references}
\end{document}